\documentclass[opre,nonblindrev]{informs3}

\OneAndAHalfSpacedXI 


\usepackage{endnotes}\usepackage{multirow}\usepackage{rotating}
\let\footnote=\endnote

%


\usepackage{natbib}
 \bibpunct[, ]{(}{)}{,}{a}{}{,}%

\newcommand{\V}[1]{\mbox{\boldmath $#1$}}

\newcommand{\OPT}{{\mbox{OPT}}}

\newcommand{\LP}{{\tt{LP}}}

\newcommand{\ONLINE}{\mbox{ONLINE}}

\newcommand{\OBJ}{\widehat{\mbox{OPT}}}

\newcommand{\topw}{\mbox{topw}}

\newcommand{\Ex}{\mathbb{E}}

\newcommand{\comment}[1]{}
\newcommand{\shcomment}[1]{}
\definecolor{light-gray}{gray}{0.4}

\newcommand{\ba}{\mbox{\boldmath $a$}}

\newcommand{\bff}{\mbox{\boldmath $f$}}
\newcommand{\bg}{\mbox{\boldmath $g$}}

\newcommand{\bp}{\mbox{\boldmath $p$}}
\newcommand{\bq}{\mbox{\boldmath $q$}}

\newcommand{\bx}{\mbox{\boldmath $x$}}
\newcommand{\by}{\mbox{\boldmath $y$}}


\TheoremsNumberedThrough     
\ECRepeatTheorems

\EquationsNumberedThrough    


\begin{document}


\RUNAUTHOR{Agrawal, Wang and Ye}

\RUNTITLE{A Dynamic Near-Optimal Algorithm for Online Linear
Programming}

\TITLE{A Dynamic Near-Optimal Algorithm for Online Linear
Programming}

\ARTICLEAUTHORS{ \AUTHOR{Shipra Agrawal}\AFF{Microsoft Research
India, Bangalore, India, \EMAIL{shipra@microsoft.com}}
\AUTHOR{Zizhuo Wang} \AFF{Department of Industrial and Systems
Engineering, University of Minnesota, Minneapolis, MN 55455,
\EMAIL{zwang@umn.edu}} \AUTHOR{Yinyu Ye} \AFF{Department of
Management Science and Engineering, Stanford University, Stanford,
CA 94305, \EMAIL{yinyu-ye@stanford.edu}} }

\ABSTRACT{A natural optimization model that formulates many online
resource allocation problems is the online linear programming (LP)
problem in which the constraint matrix is revealed column by column
along with the corresponding objective coefficient. In such a model,
a decision variable has to be set each time a column is revealed
without observing the future inputs and the goal is to maximize the
overall objective function. In this paper, we propose a near-optimal
algorithm for this general class of online problems under the
assumptions of random order of arrival and some mild conditions on
the size of the LP right-hand-side input. Specifically, our
learning-based algorithm works by dynamically updating a threshold
price vector at geometric time intervals, where the dual prices
learned from the revealed columns in the previous period are used to
determine the sequential decisions in the current period. Due to the
feature of dynamic learning, the competitiveness of our algorithm
improves over the past study of the same problem. We also present a
worst-case example showing that the performance of our algorithm is
near-optimal.}


\KEYWORDS{online algorithms; linear programming; primal-dual; dynamic price update} 

\maketitle

%


\section{Introduction}
\label{sec:intro} Online optimization is attracting increasingly
wide attention in the computer science, operations research, and
management science communities. In many practical problems, data
does not reveal itself at the beginning, but rather comes in an
online fashion. For example, in online revenue management problems,
consumers arrive sequentially, each requesting a subset of goods
(e.g., multi-leg flights or a period of stay in a hotel) and
offering a bid price. On observing a request, the seller needs to
make an irrevocable decision whether to accept or reject the current
bid with the overall objective of maximizing the revenue while
satisfying the resource constraints. Similarly, in online routing
problems, the network capacity manager receives a sequence of
requests from users with intended usage of the network, each with a
certain utility. And his objective is to allocate the network
capacity to maximize the social welfare. A similar format also
appears in online auctions, online keyword matching problems, online
packing problems, and various other online revenue management and
resource allocation problems. For an overview of the online
optimization literature and its recent development, we refer the
readers to \cite{borodin}, \cite{buchbiner-long09} and
\cite{Devanur2011}.

In many examples mentioned above, the problem can be formulated as
an online linear programming problem (Sometimes, people consider the
corresponding integer program. While our discussion focuses on the
linear programming relaxation of these problems, our results
naturally extend to integer programs. See Section
\ref{sec:ext-integer} for the discussion on this). An online linear
programming problem takes a linear program as its underlying form,
while the constraint matrix is revealed column by column with the
corresponding coefficient in the objective function. After observing
the input arrived so far, an immediate decision must be made without
observing the future data. To be precise, we consider the following
(offline) linear program
\begin{equation}
\begin{array}{lll}\label{primaloffline}
\mbox{maximize}&\sum_{j=1}^n \pi_jx_j\\
\mbox{subject to} & \sum_{j=1}^n a_{ij}x_{j}\le b_i, & i=1,\ldots, m\\
&0\le x_j\le 1,& j=1,\ldots, n,\\
\end{array}
\end{equation}
where for all $j$, $\pi_j\ge0$, $\V{a}_j=\{a_{ij}\}_{i=1}^m\in
[0,1]^m$,\footnote{The assumption that $a_{ij}\le 1$ is not
restrictive as we can normalize the constraint to meet this
requirement.} and $\V{b}=\{b_i\}_{i=1}^m\in {\mathbb R}^m$. In the
corresponding {\it online linear programming} problem, at each time
$t$, the coefficients $(\pi_t, \V{a}_t)$ are revealed, and the
decision variable $x_t$ has to be chosen. Given the previous $t-1$
decisions $x_1, \ldots, x_{t-1}$, and input $\{\pi_j,
\V{a}_j\}_{j=1}^t$ until time $t$, the $t^{th}$ decision variable
$x_t$ has to satisfy
\begin{equation}
\begin{array}{ll}\label{model1}
\sum_{j=1}^ta_{ij}x_j\le b_i, & i=1,\ldots, m\\
 0\le x_t\le 1.
\end{array}
\end{equation}
The goal in the online linear programming problem is to choose
$x_t$'s such that the objective function $\sum_{t=1}^{n}\pi_tx_t$ is
maximized.

In this paper, we propose algorithms that achieve good performance
for solving the online linear programming problem. In order to
define the performance of an algorithm, we first need to make some
assumptions regarding the input parameters. We adopt the following
random permutation model in this paper:
\begin{assumption}\label{assumption1}
The columns $\V{a}_j$ (with the objective coefficient $\pi_j$)
arrive in a random order. The set of columns
$(\V{a}_1,\V{a}_2,...,\V{a}_n)$ can be adversarily picked at the
start. However, the arrival order of $(\V{a}_1,\V{a}_2,...,\V{a}_n)$
is uniformly distributed over all the permutations.
\end{assumption}
\begin{assumption}\label{assumption2}
We know the total number of columns $n$ a priori.
\end{assumption}

The random permutation model has been adopted in many existing
literature for online problems (see Section
\ref{subsection:literature_review} for a comprehensive review of the
related literature). It is an intermediate path between using a
worst-case analysis and assuming the distribution of the input is
known. The worst-case analysis, which is completely robust to input
uncertainty, evaluates an algorithm based on its performance on the
worst-case input (see, e.g., \cite{mehta}, \cite{buchbinder-or09}).
However, this leads to very pessimistic performance bounds for this
problem: no online algorithm can achieve better than $O(1/n)$
approximation of the optimal offline solution (\cite{babaioff}). On
the other hand, although a priori input distribution can simplify
the problem to a great extent, the choice of distribution is very
critical and the performance can suffer if the actual input
distribution is not as assumed. Specifically, Assumption 1 is weaker
than assuming the columns are drawn independently from some
(possibly unknown) distribution. Indeed, one can view $n$ i.i.d.
columns as first drawing $n$ samples from the underlying
distribution and then randomly permute them. Therefore, our proposed
algorithm and its performance would also apply if the input data is
drawn i.i.d. from some distribution.

Assumption 2 is required since we need to use the quantity $n$ to
decide the length of history for learning the threshold prices in
our algorithm. In fact, as shown in \cite{Devanur}, it is necessary
for any algorithm to get a near-optimal performance.\footnote{An
example to show the knowledge of $n$ is necessary to obtain a
near-optimal algorithm is as follows. Suppose there is only one
product and the inventory is $n$. And all $a_i$'s are $1$. There
might be $n$ or $2n$ arrivals. And in either case, half of them have
value $1$ and half of them have value $2$. Now consider any
algorithm. If it accepts less than $2/3$ among the first $n$
arrivals, the loss is at least $n/3$ (or $1/6$ of the optimal value)
if in fact there are $n$ arrivals in total. On the other hand, if it
accepts more than $2/3$ among the first $n$ arrivals, then it must
have accepted more than $n/6$ bids with value $1$. And if the true
number of arrivals is $2n$, then it will also have a loss of at
least $1/12$ of the true optimal value. Thus if one doesn't know the
exact $n$, there always exists a case where the loss is a constant
fraction of the true optimal value.} However, this assumption can be
relaxed to an approximate knowledge of $n$ (within at most $1 \pm
\epsilon$ multiplicative error), without affecting our results.

We define the competitiveness of online algorithms as follows:

\begin{definition} Let $\OPT$ denote the optimal objective value for
the offline problem (\ref{primaloffline}). An online algorithm
${\cal A}$ is {\it $c$-competitive} in the random permutation model
if the expected value of the online solution obtained by using
${\cal A}$ is at least $c$ factor of the optimal offline solution.
That is,
\begin{eqnarray*}
\Ex_{\sigma}\left[\sum_{t=1}^n \pi_tx_t\right] \ge c \cdot \OPT,
\end{eqnarray*}
where the expectation is taken over uniformly random permutations
$\sigma$ of $1, \ldots, n$, and $x_t$ is the $t^{th}$ decision made
by algorithm ${\cal A}$ when the inputs arrive in order $\sigma$.
\end{definition}

In this paper, we present a near-optimal algorithm for the online
linear program (\ref{model1}) under the above two assumptions and a
lower bound condition on the size of $\V{b}$. We also extend our
results to the following more general online linear optimization
problems with multi-dimensional decisions at each time period:
\begin{itemize}
\item Consider a sequence of $n$
non-negative vectors $\V{f}_1, \V{f}_2, \ldots, \V{f}_n\in
\mathbb{R}^k$, $mn$ non-negative vectors
\[\V{g}_{i1}, \V{g}_{i2}, \ldots, \V{g}_{in}\in [0,1]^k,\quad i=1,\ldots, m,\]
and $K=\{\bx \in {\mathbb R}^k: \bx^T\V{e}\le 1, \bx \ge 0\}$ (we
use $\V{e}$ to denote the all 1 vectors). The offline linear program
is to choose $\bx_1$,...,$\bx_n$ to solve
\begin{equation*}
\begin{array}{lll}
\label{multidimensionaloffline} \mbox{maximize} & \sum_{j=1}^n
\bff_j^T\bx_j \\
\mbox{subject to} & \sum_{j=1}^n \bg_{ij}^T\bx_j \le b_i, & i =
1,...,m\\
& \bx_j \in K.
\end{array}
\end{equation*}
In the corresponding online problem, given the previous $t-1$
decisions $\bx_1, \ldots, \bx_{t-1}$, each time we choose a
$k$-dimensional decision $\bx_t\in {\mathbb R}^k$, satisfying:
\begin{equation}
\begin{array}{rl}
\label{multidimensionalonline}
\sum_{j=1}^t \V{g}^T_{ij}\bx_j \le b_i, & i=1, \ldots, m\\
 \bx_t \in K,\\
\end{array}
\end{equation}
using the knowledge up to time $t$. And the objective is to maximize
$\sum_{j=1}^n \V{f}^T_j \bx_j$ over the entire time horizon. Note
that Problem (\ref{model1}) is a special case of Problem
(\ref{multidimensionalonline}) with $k=1$.
\end{itemize}

\subsection{Specific applications}
In the following, we show some specific applications of the online
linear programming model. The examples are only a few among the wide
range of applications of this model.

\subsubsection{Online knapsack/secretary problem} The one
dimensional version of the online linear programming problem studied
in this paper is usually referred as online knapsack or secretary
problem. In such problems, a decision maker faces a sequence of
options, each with a certain cost and a value, and he has to choose
a subset of them in an online fashion so as to maximize the total
value without violating the cost constraint. Applications of this
problem arise in many contexts, such as hiring workers, scheduling
jobs and bidding in sponsored search auctions.

Random permutation model has been widely adopted in the study of
this problem, see \cite{kleinberg} and \cite{babaioff-knapsack} and
references thereafter. In those papers, either a constant
competitive ratio is obtained for finite-sized problems or a
near-optimal algorithm is proposed for large-sized problems. In this
paper, we study an extension of this problem to higher dimension and
propose a near-optimal algorithm for it.

\subsubsection{Online routing problem}
Consider a computer network connected by $m$ edges, each edge $i$
has a bounded capacity (bandwidth) $b_i$. There are a large number
of requests arriving online, each asking for certain capacities
$\V{a}_t \in {\mathbb R}^m$ in the network, along with a utility or
price for his/her request. The offline problem for the decision
maker is given by the following integer program:
\begin{equation*}
\label{routing}
\begin{array}{lll}
\mbox{maximize }& \sum_{t=1}^n \pi_t x_t & \\
\mbox{subject to} & \sum_{t=1}^n a_{it}x_t \le b_i & i=1,\ldots, m \\
& x_t \in \{0,1\}. &
\end{array}
\end{equation*}
Discussions of this problem can be found in \cite{buchbinder-or09},
\cite{awerbuch} and references therein. Note that this problem is
also studied under the name of online packing problem.

\subsubsection{Online adwords problem}
Selling online advertisements has been the main revenue driver for
many internet companies such as Google, Yahoo, etc. Therefore
improving the performance of ads allocation systems becomes
extremely important for those companies and thus has attracted great
attention in the research community in the past decade. In the
literature, the majority of the research adopts an {\it online
matching} model, see e.g., \cite{mehta, Goel, Devanur, Karande,
Bahmani, mahdian, feldman2010online, feldman2009, feldman2009focs}.
In such models, there are $n$ search queries arriving online. And
there are $m$ bidders (advertisers) each with a daily budget $b_i$.
Based on the relevance of each search keyword, the $i$th bidder will
bid a certain amount $\pi_{ij}$ on query $j$ to display his
advertisement along with the search result.\footnote{Here we assume
the search engines use a pay-per-impression scheme. The model can be
easily adapted to a pay-per-click scheme by multiplying the bid
value by the click-through-rate parameters. Also we assume there is
only one advertisement slot for each search result.} For the
$j^{th}$ query, the decision maker (i.e., the search engine) has to
choose an $m$-dimensional vector $\bx_j=\{x_{ij}\}_{i=1}^m$, where
$x_{ij} \in \{0,1\}$ indicates whether the $j^{th}$ query is
allocated to the $i^{th}$ bidder. The corresponding offline problem
can be formulated as:
\begin{equation}
\begin{array}{lll}\label{adword}
\mbox{maximize } & \sum_{j=1}^n \V{\pi}^T_j\V{x}_j & \\
\mbox{subject to} & \sum_{j=1}^n \pi_{ij}x_{ij} \le b_i, & i=1,\dots,m \\
& \bx_j^T\V{e} \le 1 \\
& \bx_j \in \{0, 1\}^m.
\end{array}
\end{equation}
The linear programming relaxation of (\ref{adword}) is a special
case of the general online linear programming problem
(\ref{multidimensionalonline}) with $\V{f}_j = \V{\pi}_j$,
$\V{g}_{ij}=\pi_{ij}\V{e}_i$ where $\V{e}_i$ is the $i$th unit
vector of all zeros except $1$ for the $i$th entry.

In the literature, the random permutation assumption has attracted
great interests recently for its tractability and generality.
Constant competitive algorithm as well as near-optimal algorithms
have been proposed. We will give a more comprehensive review in
Section \ref{subsection:literature_review}.

\subsection{Key ideas and main results}
\label{subsection:main results}

The main contribution of this paper is to propose an algorithm that
solves the online linear programming problem with a near-optimal
competitive ratio under the random permutation model. Our algorithm
is based on the observation that the optimal solution ${\bx}^*$ for
the offline linear program can be largely determined by the optimal
dual solution $\bp^* \in \mathbb{R}^m$ corresponding to the $m$
inequality constraints. The optimal dual solution acts as a {\it
threshold price} so that $x^*_j>0$ only if $\pi_j\ge{\bp^*}^T
\V{a}_j $. Our online algorithm works by learning a threshold price
vector from some initial inputs. The price vector is then used to
determine the decisions for later periods. However, instead of
computing the price vector only once, our algorithm initially waits
until $\epsilon n$ steps or arrivals, and then computes a new price
vector every time the history doubles, i.e., at time $\epsilon n,
2\epsilon n, 4\epsilon n, \ldots$ and so on. We show that our
algorithm is $1-O(\epsilon)$-competitive in the random permutation
model under a size condition of the right-hand-side input. Our main
results are precisely stated as follows:

\begin{theorem}
\label{th:main} For any $\epsilon>0$, our online algorithm is
$1-O(\epsilon)$ competitive for the online linear program
(\ref{model1}) in the random permutation model, for all inputs such
that
\begin{equation}\label{condition2}
B=\min_ib_i\ge\Omega\left(\frac{m\log{(n/\epsilon)}}{\epsilon^2}\right).
\end{equation}
\end{theorem}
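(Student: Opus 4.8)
The plan is to reduce the competitive analysis to the behavior of a \emph{threshold rule} induced by a learned dual price, and then to control the loss incurred across the geometrically spaced learning epochs. I would first exploit the duality observation already highlighted in the text: if $\bp^*$ is an optimal dual price for the capacity constraints of (\ref{primaloffline}), then some optimal primal solution sets $x_j^*=1$ whenever $\pi_j>{\bp^*}^T\V{a}_j$ and $x_j^*=0$ whenever $\pi_j<{\bp^*}^T\V{a}_j$, with at most $m$ fractional coordinates (the basic variables tied to the $m$ binding rows), so $\OPT$ equals the value of a pure threshold rule up to these $m$ boundary terms, which are negligible against $\OPT$ once (\ref{condition2}) forces the number of selected columns to exceed $B\gg m$. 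The algorithm mimics this by replacing $\bp^*$ with a price $\hat\bp$ read off from the prefix already seen: at the epoch beginning at time $s$, it solves the \emph{scaled} sampled program $\max\sum_{j\le s}\pi_jx_j$ subject to $\sum_{j\le s}a_{ij}x_j\le \tfrac{s}{n}b_i$ and $0\le x_j\le 1$, takes $\hat\bp$ as its optimal dual, and applies the threshold rule with $\hat\bp$ until the next epoch. Under the random-order assumption the length-$s$ prefix is a uniformly random size-$s$ subset, exactly the regime where a sampled dual price is informative.

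The crux is a concentration bound that holds \emph{uniformly} over every price the algorithm might learn. For a \emph{fixed} $\bp$, the induced objective $\sum_j\pi_j\mathbf{1}[\pi_j>\bp^T\V{a}_j]$ and each consumption $\sum_j a_{ij}\mathbf{1}[\pi_j>\bp^T\V{a}_j]$ are sums of bounded terms, so under a random permutation they concentrate around their expectations; since each consumption has mean of order $b_i\ge B$, a Bernstein estimate gives a relative deviation of order $\sqrt{\log(\cdot)/B}$, which the hypothesis $B\ge\Omega(m\log(n/\epsilon)/\epsilon^2)$ drives below $O(\epsilon)$. The subtlety is that $\hat\bp$ is data-dependent. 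I would resolve this not through a generic $\epsilon$-net but by noting that any price returned by an LP over a column subset is a basic dual solution, pinned by at most $m$ tight constraints chosen among the $n$ columns and the $m$ capacity rows; there are at most $n^{O(m)}$ such bases, so $\log$ of their number is $O(m\log n)$. A union bound over these bases and over the $O(\log(1/\epsilon))$ epochs costs precisely the $m\log(n/\epsilon)$ appearing in (\ref{condition2}) and shows that, simultaneously for every learnable price, objective and consumption lie within a $(1\pm O(\epsilon))$ factor of their expectations.

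Granting the uniform bound, the competitive guarantee follows by a telescoping argument over epochs. For feasibility, the scaling $\tfrac{s}{n}b_i$ ensures that in expectation the threshold rule with $\hat\bp$ consumes at most the budget allotted to its epoch, and concentration upgrades this to a high-probability guarantee that no constraint is exhausted before the horizon except on a bad event of probability $O(\epsilon)$, on which we simply stop and forfeit at most an $O(\epsilon)$ fraction of value. For the objective, the first epoch discards its initial $n\epsilon$ columns (an $O(\epsilon)$ loss), while each later epoch learns from a strictly longer, more representative prefix, so its per-column value deficit against $\OPT/n$ decays geometrically; summing over the $O(\log(1/\epsilon))$ epochs keeps the total deficit at $O(\epsilon)\,\OPT$, which is where dynamic re-solving beats a single learned price. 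I expect the main obstacle to be exactly the data-dependence step: making the union bound over dual bases rigorous requires controlling how a perturbation of the sampled budget $\tfrac{s}{n}b_i$ shifts the learned price and hence the induced threshold decisions, and then tying the resulting objective loss back to $\OPT$ through complementary slackness rather than through the unknown true price $\bp^*$.
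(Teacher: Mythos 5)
Your overall architecture matches the paper's: threshold decisions from a learned dual price, epochs doubling in length, a union bound over the at most $n^{O(m)}$ prices an LP can produce (the paper counts distinct threshold decision vectors via hyperplane separations, giving $n^m$; your count over dual bases is equivalent for this purpose and costs the same $O(m\log n)$ in the exponent), complementary slackness on the objective side, and a telescoping sum over epochs. Also, the "main obstacle" you flag at the end --- controlling how a perturbation of the sampled budget shifts the learned price --- is a non-issue: once you fix a candidate price $\bp$ and bound the probability of the event that $\bp$ is simultaneously the learned price \emph{and} its threshold rule misbehaves, the union bound over all candidate prices absorbs the data dependence with no stability analysis whatsoever; this is exactly how Lemma \ref{prop:constraints} is argued.

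The genuine gap is the right-hand side of the sampled program. You solve the epoch LP with budget $\frac{s}{n}b_i$ and assert that since the threshold rule then consumes at most its epoch's allotment \emph{in expectation}, "concentration upgrades this to a high-probability guarantee." That step fails: deviations are two-sided, so if the expected consumption sits at the allotted budget (and complementary slackness makes it sit exactly there on binding rows), the realized consumption in the next epoch exceeds the allotment with probability roughly $1/2$ per binding constraint. The paper instead shrinks the sampled budget to $(1-h_\ell)\frac{\ell}{n}b_i$ with the epoch-dependent slack $h_\ell=\epsilon\sqrt{n/\ell}$ as in \eqref{lstep}--\eqref{definitionofh}, and this choice is precisely where the exponent in \eqref{condition2} comes from: the Hoeffding--Bernstein deviation at epoch $\ell$ is of order $\sqrt{(\ell/n)\,b_i\log(\cdot)}$, and demanding it stay below the margin $h_\ell\frac{\ell}{n}b_i$ yields the epoch-\emph{independent} requirement $\epsilon^2 b_i\gtrsim m\log(n/\epsilon)$, while the cumulative objective loss $\sum_{\ell\in L} h_\ell\frac{\ell}{n}=\epsilon\sum_{\ell\in L}\sqrt{\ell/n}\le 2.5\epsilon$ remains $O(\epsilon)$ because $h_\ell\frac{\ell}{n}$ grows geometrically toward its largest term. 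If you patch your scheme with a \emph{constant} slack $\epsilon$ instead, the first epoch ($\ell=n\epsilon$) forces $b_i\gtrsim m\log(n/\epsilon)/\epsilon^3$, which is the one-time-learning bound of Proposition \ref{prop:one-time}, not Theorem \ref{th:main}. So as written your argument either produces infeasible allocations with constant probability or proves only the weaker $1/\epsilon^3$ condition; the rest of your outline (the $m$ boundary columns from Lemma \ref{KKT}, bounding the forfeited learning-prefix value via $\Ex[\OPT_\ell(\frac{\ell}{n}\V{b})]\le\frac{\ell}{n}\OPT$, and the final conditioning on the good event) is in line with the paper's proof.
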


An alternative way to state Theorem \ref{th:main} is that our
algorithm has a competitive ratio of $1-
O\left(\sqrt{m\log{n}/B}\right)$. We prove Theorem \ref{th:main} in
Section \ref{sec:dynamic}. Note that the condition in Theorem
\ref{th:main} depends on $\log{n}$, which is far from satisfying
everyone's demand when $n$ is large. In \cite{kleinberg}, the author
proves that $B\ge 1/\epsilon^2$ is necessary to get a
$1-O(\epsilon)$ competitive ratio in the $B$-secretary problem,
which is the single dimensional counterpart of the online LP problem
with $a_{t}=1$ for all $t$. Thus, the dependence on $\epsilon$ in
Theorem \ref{th:main} is near-optimal. In the next theorem, we show
that a dependence on $m$ is necessary for any online algorithm to
obtain a near-optimal solution. Its proof will appear in Section
\ref{sec:low}.

\begin{theorem}
\label{th:lowerbound} For any algorithm for the online linear
programming problem (\ref{model1}) in the random permutation model,
there exists an instance such that its competitive ratio is less
than $1-\Omega(\epsilon)$ when
$$ B=\min_i b_i \le \frac{\log(m)}{\epsilon^2}.$$
Or equivalently, no algorithm can achieve a competitive ratio better
than $1-\Omega\left(\sqrt{\log{m}/B}\right)$.
\end{theorem}

We also extend our results to the more general model as introduced
in (\ref{multidimensionalonline}) :
\begin{theorem}
\label{th:convex} For any $\epsilon>0$, our algorithm is
$1-O(\epsilon)$ competitive for the general online linear
programming problem (\ref{multidimensionalonline}) in the random
permutation model, for all inputs such that:
\begin{equation}
\label{convexcondition} B=\min_ib_i \ge
\Omega\left(\frac{m\log{(nk/\epsilon)}}{\epsilon^2}\right).
\end{equation}
\end{theorem}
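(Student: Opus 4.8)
The plan is to run the natural generalization of the Theorem~\ref{th:main} algorithm and to lift its analysis to (\ref{multidimensionaloffline}). At the geometric time steps $n\epsilon,2n\epsilon,4n\epsilon,\ldots$ the algorithm solves the partial program (\ref{multidimensionaloffline}) restricted to the columns seen so far, with right-hand sides scaled by the fraction of columns observed, and extracts a price vector $\hat{\bp}\in\mathbb{R}_{\ge 0}^m$ for the $m$ resource constraints $\sum_j\V{g}_{ij}^T\bx_j\le b_i$. Given $\hat{\bp}$, the decision at time $t$ maximizes the reduced-cost objective over the simplex,
\begin{equation}
\bx_t=\argmax_{\bx\in K,\;\bx\ge\bzero}\;\Big(\V{f}_t-\sum_{i=1}^m\hat{p}_i\,\V{g}_{it}\Big)^T\bx,
\end{equation}
which, since $K$ is a simplex, places unit weight on the coordinate of largest positive reduced cost and equals $\bzero$ when no reduced cost is positive. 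This is the exact generalization of the scalar threshold rule $x_j=1\Leftrightarrow\pi_j>\hat{\bp}^T\V{a}_j$.

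First I would record the structural fact that justifies this rule. For any fixed resource prices $\bp$, Lagrangian-relaxing only the $m$ resource constraints leaves a problem that decouples across time, because the remaining constraints $\bx_t\in K$ are local to each time step; the $t$-th subproblem is exactly the simplex maximization above. Hence the optimal dual solution $\bp^*$ of the resource constraints still acts as a threshold price and the greedy simplex decision is precisely the primal best response to $\bp$, giving the analogue of the scalar characterization $x_j^*>0\Rightarrow\pi_j\ge{\bp^*}^T\V{a}_j$.

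Next I would transfer the two quantitative ingredients of the proof of Theorem~\ref{th:main}. The reason the same concentration estimates survive is that each time step still contributes a bounded amount to every resource: since $\bx_t\in K$ and $\V{g}_{it}\in[0,1]^k$ we have $0\le\V{g}_{it}^T\bx_t\le\bx_t^T\V{e}\le 1$, exactly as $a_{ij}x_j\in[0,1]$ in the scalar case. Therefore the empirical resource usage on a random sample of size $s$ concentrates around its mean with the same Hoeffding-type bounds, which yields (a) that the sampled prices $\hat{\bp}$ are close to the optimal dual prices $\bp^*$, and (b) that the induced simplex decisions stay feasible after the usual safety margin and collect objective value $(1-O(\epsilon))\OPT$ with high probability.

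The single place where the bound changes---and the step I expect to be the main obstacle---is the union bound. In the scalar case the outcome on each column is fixed by one comparison $\pi_j\gtrless\hat{\bp}^T\V{a}_j$, contributing the $\log n$ term in (\ref{condition2}); here the outcome at each time step is chosen among $k$ coordinates by comparing reduced costs, so simultaneously controlling all decisions demands a union bound over the $nk$ (time, coordinate) pairs, replacing $\log(n/\epsilon)$ by $\log(nk/\epsilon)$ and producing condition (\ref{convexcondition}). The delicate part is to show that, despite the local simplex constraints, the sample optimal resource prices are stable enough that no coordinatewise reduced-cost comparison flips in a way that overspends a resource; I expect this to follow by applying the dual-stability estimate of Theorem~\ref{th:main} coordinatewise and union-bounding over the enlarged column set, so that the feasibility and optimality guarantees hold jointly with probability $1-O(\epsilon)$.
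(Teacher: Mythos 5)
Your overall plan coincides with the paper's: run the same dynamic pricing algorithm with $\bx_t(\bp)$ defined as the reduced-cost maximizer over the simplex (this is exactly \eqref{outputgeneral}), note that $\V{g}_{it}^T\bx_t\in[0,1]$ so the Hoeffding--Bernstein estimates carry over verbatim, and locate the only change in the union bound. But the way you execute the union bound has a genuine gap. You propose to union-bound ``over the $nk$ (time, coordinate) pairs,'' i.e., over individual reduced-cost comparisons. There is no fixed event to bound per pair: the learned price $\hat{\bp}$ is a random variable depending on the sample, so ``$f_{tj}-\hat{\bp}^T\V{g}_{tj}\gtrless 0$'' is not an event to which concentration for a fixed sequence applies. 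The whole point of the covering step (in Lemmas \ref{lem:one-time-constraints} and \ref{prop:constraints}) is to union-bound over the finitely many \emph{equivalence classes of price vectors}, where $\bp$ and $\bq$ are identified iff they induce the same allocation $\{\bx_t(\bp)\}_t$; only for a fixed class is $Y_t=\V{g}_{it}^T\bx_t(\bp)$ a deterministic function of column $t$, which is what the sampling-without-replacement inequality needs. Your diagnosis of the scalar case is off for the same reason: the $m\log n$ in \eqref{condition2} does not come from ``one comparison per column'' but from the $n^m$ count of distinct prices, i.e., hyperplane separations of $n$ points in $\mathbb{R}^m$.

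The correct multidimensional count is the paper's Lemma \ref{multilemma2}: the allocation $\bx_t(\bp)$ is determined by the signs of the $nk^2$ affine functions of $\bp$ given by $f_{tj}-\bp^T\V{g}_{tj}-(f_{tl}-\bp^T\V{g}_{tl})$ for $j\ne l$ and $f_{tj}-\bp^T\V{g}_{tj}$, and since $\bp$ ranges over $\mathbb{R}^m$, a hyperplane-arrangement count bounds the number of distinct prices by $(nk^2)^m=n^mk^{2m}$; taking $\delta=\epsilon/(m\, n^mk^{2m}E)$ in the union bound gives $\log(1/\delta)=O(m\log(nk/\epsilon))$ and hence \eqref{convexcondition}. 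Your final expression $\log(nk/\epsilon)$ accidentally matches only because $m\log(nk^2)=O(m\log(nk))$; a union bound over $nk$ pairs would not even produce the $m$ multiplier on the logarithm. Separately, you never supply the multidimensional analogue of Lemma \ref{KKT}---that $\bx_t(\bp^*)$ and $\bx^*_t$ differ in at most $m$ time steps (the paper's Lemma \ref{multilemma1}, via the general-position argument: each degenerate $t$ forces $\bp$ to satisfy one of the equations above, and at most $m$ can hold simultaneously). This lemma is what lower-bounds the consumed capacity $\hat{b}_i$ when $\hat{p}_i>0$ (the analogue of \eqref{eq:deviation-inequality2}). Your appeal to ``the sampled prices $\hat{\bp}$ are close to $\bp^*$'' is not a substitute: the paper's analysis never establishes proximity of $\hat{\bp}$ to $\bp^*$ in any metric, only that the allocation induced by $\hat{\bp}$ is feasible and near-optimal, and your proposal gives no independent proof of such price stability.
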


Now we make some remarks on the conditions in Theorem \ref{th:main}
and \ref{th:convex}. First of all, the conditions only depend on the
right-hand-side input $b_i$'s, and are independent of the size of
$\OPT$ or the objective coefficients. And by the random permutation
model, they are also independent of the distribution of the input
data. In this sense, our results are quite robust in terms of the
input data uncertainty. In particular, one advantage of our result
is that the conditions are checkable before the algorithm is
implemented, which is unlike the conditions in terms of $\OPT$ or
the objective coefficients.  Even just in terms of $b_i$, as shown
in Theorem \ref{th:lowerbound}, the dependence on $\epsilon$ is
already optimal and the dependence on $m$ is necessary. Regarding
the dependence on $n$, we only need $B$ to be of order $\log{n}$,
which is far less than the total number of bids $n$. Indeed, the
condition might be strict for some small-sized problems. However, if
the budget is too small, it is not hard to imagine that no online
algorithm can do very well. On the contrary, in applications with
large amount of inputs (for example, in the online adwords problem,
it is estimated that a large search engine could receive several
billions of searches per day, even if we focus on a specific
category, the number can still be in the millions) with reasonably
large right-hand-side inputs (e.g., the budgets for the advertiser),
the condition is not hard to satisfy. Furthermore, the conditions in
Theorem \ref{th:main} and \ref{th:convex} are just theoretical
results, the performance of our algorithm might still be very good
even if the conditions are not satisfied (as shown in some numerical
tests in \cite{wang_thesis}). Therefore, our results are of both
theoretical and practical interests.

Finally, we finish this section with the following corollary:

\begin{corollary}\label{cor:nonuniforma}
In the online linear programming problem (\ref{model1}) and
(\ref{multidimensionalonline}), if the largest entry of constraint
coefficients does not equal to $1$, then both our Theorem
\ref{th:main} and \ref{th:convex} still hold with the conditions
(\ref{condition2}) and (\ref{convexcondition}) replaced by:
\[\frac{b_i}{\bar{a}_i}\ge \Omega\left(\frac{m\log{(nk/\epsilon)}}{\epsilon^2}\right),\ \forall i,
\]
where, for each row $i$, $\bar{a}_i=\max_j\{|a_{ij}|\}$ of
(\ref{model1}), or $\bar{a}_i=\max_j\{\|\V{g}_{ij}\|_{\infty}\}$ of
(\ref{multidimensionalonline}).
\end{corollary}

\subsection{Related work}
\label{subsection:literature_review} The design and analysis of
online algorithms have been a topic of wide interest in the computer
science, operations research, and management science communities.
Recently, the random permutation model has attracted growing
popularity since it avoids the pessimistic lower bounds of the
adversarial input model while still capturing the uncertainty of the
inputs. Various online algorithms have been studied under this
model, including the secretary problem (\cite{kleinberg, babaioff}),
the online matching and adwords problem (\cite{Devanur,
feldman2009focs, Goel, mahdian, Karande, Bahmani}) and the online
packing problem (\cite{feldman2010online, Molinaro2012}). Among
these work, two types of results are obtained: one achieves a
constant competitive ratio independent of the input parameters; the
other focuses on the performance of the algorithm when the input
size is large. Our paper falls into the second category. In the
following literature review, we will focus ourselves on this
category of work.

The first result that achieves a near-optimal performance in the
random permutation model is by \cite{kleinberg}, in which a
$1-O(1/\sqrt{B})$ competitive algorithm is proposed for the single
dimensional multiple-choice secretary problem. The author also
proves that the $1-O(1/\sqrt{B})$ competitive ratio achieved by his
algorithm is the best possible for this problem. Our result extends
his work to mutli-dimensional case with competitiveness
$1-O(\sqrt{m\log{n}/B})$. Although the problem looks similar, due to
the multi-dimensional structure, different algorithms are needed and
different techniques are required for our analysis. Specifically,
\cite{kleinberg} recursively applies a randomized version of the
classical secretary algorithm while we maintain a price based on the
linear programming duality theory and have a fixed price updating
schedule. We also prove that no online algorithm can achieve a
competitive ratio better than $1-\Omega(\sqrt{\log{m}/B})$ for the
multi-dimensional problem. To the best of our knowledge, this is the
{\it first} result that shows the necessity of dependence on the
dimension $m$, for the best competitive ratio achievable for this
problem. It clearly points out that high dimensionality indeed adds
to the difficulty of this problem.

Later, \cite{Devanur} study a linear programming based approach for
the online adwords problem. In their approach, they solve a linear
program {\emph{once}} and utilize its dual solution as threshold
price to make future decisions. The authors prove a competitive
ratio of $1-O(\sqrt[3]{\pi_{\max}m^2\log{n}/\OPT})$ for their
algorithm. In our work, we consider a more general model and develop
an algorithm which updates the dual prices at a carefully chosen
pace. By using {\emph {dynamic}} updates, we achieve a competitive
ratio that can depend only on $B$: $1-O(\sqrt{m\log{n}/B})$. This is
attractive in practice since $B$ can be checked before the problem
is solved while $\OPT$ can not. Moreover, we show that the
dependence on $B$ of our result is optimal. Although our algorithm
shares similar ideas to theirs, the dynamic nature of our algorithm
requires a much more delicate design and analysis. We also answer
the important question of how often we should update the dual prices
and we show that significant improvements can be made by using the
dynamic learning algorithm.

Recently, \cite{feldman2010online} study a more general online
packing problem which allows the dimension of the choice set to vary
at each time period (a further extension of
(\ref{multidimensionalonline})). They propose a one-time learning
algorithm which achieves a competitive ratio that depends both on
the right-hand-side $B$ and $\OPT$. And the dependence on $B$ is of
order $1-O(\sqrt[3]{m\log{n}/B})$. Therefore, comparing to their
competitive ratio, our result not only removes the dependence on
$\OPT$, but also improves the dependence on $B$ by an order.  We
show that the improvement is due to the use of dynamic learning.

More recently, \cite{Molinaro2012} study the same problem and obtain
a competitive ratio of $1-O(\sqrt{m^2 \log{m}/B})$. The main
structure of their algorithm (especially the way they obtain square
root rather than cubic root) is modified from that in this paper.
They further use a novel covering technique to remove the dependence
on $n$ in the competitive ratio, at an expense of increasing an
order of $m$. In contrast, we present the improvement from the cubic
root to square root and how to remove the dependence on $\OPT$.

A comparison of the results of \cite{kleinberg}, \cite{Devanur},
\cite{feldman2010online}, \cite{Molinaro2012} and this work is shown
in Table \ref{tab:summary_of_result}.

\begin{table*}[t]
\begin{center}
\begin{tabular}{|c | c |}
\hline
& Competitiveness \\
\hline
Kleinberg (2005)  & $1 - O\left(1/\sqrt{B}\right) (\mbox{only for } m = 1)$                   \\
Devanur and Hayes (2009)  & $1-O(\sqrt[3]{\pi_{\max}m^2\log{n}/\OPT})$                 \\
Feldman et al. (2010) & $1-O(\max\{\sqrt[3]{m\log{n}/B}, \pi_{\max}m\log{n}/OPT\})$  \\
Molinaro and Ravi (2012) & $1-O(\sqrt{m^2\log{m}/B})$\\
This paper  & $1 - O(\sqrt{m\log{n}/B})$  \\
\hline
\end{tabular}
\caption{Comparison of existing
results}\label{tab:summary_of_result}
\end{center}
\end{table*}

Besides the random permutation model, \cite{Devanur2011fast} study
an online resource allocation problem under what they call the
adversarial stochastic input model. This model generalizes the case
when the columns are drawn from an i.i.d. distribution, however, it
is more stringent than the random permutation model. In particular,
their model does not allow the situations when there might be a
number of ``shocks'' in the input series. For this input model, they
develop an algorithm that achieves a competitive ratio of
$1-O\left(\max\{\sqrt{\log{m}/B},
\sqrt{\pi_{max}\log{m}/{\OPT}}\}\right)$. Their result is
significant in that it achieves near-optimal dependence on $m$.
However, the dependence on $\OPT$ and the stronger assumption makes
it not directly comparable to our results. And their algorithm uses
quite different techniques from ours.

In the operations research and management science communities, a
dynamic and optimal pricing strategy for various online revenue
management and resource allocation problems has always been an
important research topic, some literature include \cite{Elmaghraby}
,\cite{gallego,gallego2}, \cite{Talluri}, \cite{cooper} and
\cite{bitran}. In \cite{gallego,gallego2} and \cite{bitran}, the
arrival processes are assumed to be price sensitive. However, as
commented in \cite{cooper}, this model can be reduced to a price
independent arrival process with availability control under Poisson
arrivals. Our model can be further viewed as a discrete version of
the availability control model which is also used as an underlying
model in \cite{Talluri} and discussed in \cite{cooper}. The idea of
using a threshold - or ``bid'' - price is not new. It is initiated
in \cite{williamson, simpson} and investigated further in
\cite{Talluri}. In \cite{Talluri}, the authors show that the bid
price control policy is asymptotically optimal. However, they assume
the knowledge on the arrival process and therefore the price is
obtained by ``forecasting'' the future using the distribution
information rather than ``learning'' from the past observations as
we do in our paper. The idea of using linear programming to find the
dual optimal bid price is discussed in \cite{cooper} where
asymptotic optimality is also achieved. But again, the arrival
process is assumed to be known which makes the analysis quite
different.

The contribution of this paper is several fold. First, we study a
general online linear programming framework, extending the scope of
many prior work. And due to its dynamic learning capability, our
algorithm is {\it distribution free}--no knowledge on the input
distribution is assumed except for the random order of arrival and
the total number of entries. Moreover, instead of learning the price
just once, we propose a {\it dynamic} learning algorithm that
updates the prices as more information is revealed. The design of
such an algorithm answers the question raised in \cite{cooper}, that
is, how often and when should one update the price? We give an
explicit answer to this question by showing that updating the prices
at geometric time intervals -not too sparse nor too often- is
optimal. Thus we present a precisely quantified strategy for dynamic
price update. Furthermore, we provide a non-trivial lower bound for
this problem, which is the first of its kind and show for the first
time that the dimensionality of the problem adds to its difficulty.

In our analysis, we apply many standard techniques from Probably
Approximately Correct learning (PAC-learning), in particular,
concentration bounds and covering arguments. Our dynamic learning
also shares a similar idea as the ``doubling trick'' used in
learning problems. However, unlike the doubling trick which is
typically applied to an unknown time horizon (\cite{Cesa}), we show
that a geometric pace of price updating in a fixed length of horizon
with a careful design could also enhance the performance of the
algorithm.

\subsection{Organization}
\label{subsec:organization} The rest of the paper is organized as
follows. In Section \ref{sec:one-time} and \ref{sec:dynamic}, we
present our online algorithm and prove that it achieves
$1-O(\epsilon)$ competitive ratio under mild conditions on the
input. To keep the discussions clear and easy to follow, we start in
Section \ref{sec:one-time} with a simpler one-time learning
algorithm. While the analysis for this simpler algorithm will be
useful to demonstrate our proof techniques, the results obtained in
this setting are weaker than those obtained by our dynamic learning
algorithm, which is discussed in Section \ref{sec:dynamic}. In
Section \ref{sec:low}, we give a detailed proof of Theorem
\ref{th:lowerbound} regarding the necessity of lower bound condition
used in our main theorem. In Section \ref{sec:ext}, we present
several extensions of our study. Then we conclude our paper in
Section \ref{sec:conclusions}.

\section{One-time Learning Algorithm}
\label{sec:one-time}

In this section, we propose a one-time learning algorithm for the
online linear programming problem. We consider the following partial
linear program defined only on the input until time $s= \epsilon n $
(for the ease of notation, without loss of generality, we assume
$\epsilon n $ is an integer throughout our analysis):
\begin{equation}
\begin{array}{lll}\label{primalsample}
\mbox{maximize} &\sum_{t=1}^{s} \pi_t x_t & \\
\mbox{subject to}& \sum_{t=1}^{s} a_{it}x_t\le
(1-\epsilon)\frac{s}{n} b_i, & i=1,\ldots, m\\
& 0 \le x_t \le 1, & t=1, \ldots, s,\\
\end{array}
\end{equation}
and its dual problem:
\begin{equation}
\begin{array}{lll}\label{dualsample}
\mbox{minimize} & \sum_{i=1}^m (1-\epsilon)\frac{s}{n}b_ip_i
+ \sum_{t=1}^s y_t &\\
\mbox{subject to} & \sum_{i=1}^m a_{it}p_i + y_t \ge \pi_t, & t =
1,\ldots, s\\
& p_i, y_t \ge 0, & i = 1,\ldots, m, t=1,\ldots, s.\\
\end{array}
\end{equation}
Let $(\V{\hat{p}}, \V{\hat{y}})$ be the optimal solution to
(\ref{dualsample}). Note that $\V{\hat{p}}$ has the natural meaning
of the price for each resource. For any given price vector $\V{p}$,
we define the allocation rule $x_t(\bp)$ as follows:
\begin{equation}
\label{linear:xp} x_t(\bp) = \left\{\begin{array}{ll}
                    0 & \mbox{ if } \pi_t \le \bp^T\V{a}_t\\
                    1 & \mbox{ if } \pi_t   > \bp^T\V{a}_t.\\
\end{array}
\right.
\end{equation}

We now state our one-time learning algorithm:\\

{\bf\noindent Algorithm OLA (One-time Learning Algorithm):}
\begin{enumerate}
\item Initialize $x_t = 0$, for all
$t\le s$. And $\hat{\V{p}}$ is defined as above.
\item For $t=s+1, s+2, \dots,n$, if $a_{it}x_t(\hat{\bp})\le b_i-\sum_{j=1}^{t-1}a_{ij}x_j$ for all $i$,
set $x_t=x_t(\hat{\bp})$; otherwise, set $x_t=0$. Output $x_t$.\\
\end{enumerate}
In the one-time learning algorithm, we learn a dual price vector
using the first $\epsilon n$ arrivals. Then, at each time $t>
\epsilon n$, we use this dual price to decide the current
allocation, and execute this decision as long as it doesn't violate
any of the constraints. An attractive feature of this algorithm is
that it requires to solve only one small linear program, defined on
$\epsilon n$ variables. Note that the right-hand-side of
(\ref{primalsample}) is modified by a factor $1-\epsilon$. This
modification is to guarantee that with high probability, the
allocation $x_t(\V{p})$ does not violate the constraints. This trick
is also used in Section \ref{sec:dynamic} when we study the dynamic
learning algorithm. In the next subsection, we prove the following
proposition regarding the competitive ratio of the one-time learning
algorithm, which relies on a stronger condition than Theorem
\ref{th:main}:
\begin{proposition}
\label{prop:one-time} For any $\epsilon>0$, the one-time learning
algorithm is $1-6\epsilon$ competitive for the online linear program
(\ref{model1}) in the random permutation model, for all inputs such
that
\begin{equation*}
\label{condition-one-time} B=\min_ib_i\ge
\frac{6m\log(n/\epsilon)}{\epsilon^3}.
\end{equation*}
\end{proposition}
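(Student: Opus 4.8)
The plan is to prove that the dual vector $\hat{\bp}$ learned on the random sample of the first $s=\lceil n\epsilon\rceil$ columns is, with high probability, a good \emph{global} threshold: the induced $0/1$ allocation $x_t(\hat{\bp})$ is both feasible for the whole program and of value $(1-O(\epsilon))\OPT$. Set $A_i(\bp)=\sum_{t=1}^n a_{it}x_t(\bp)$, $V(\bp)=\sum_{t=1}^n\pi_t x_t(\bp)$, and let $D(\bp)=\bp^T\V{b}+\sum_{t=1}^n(\pi_t-\bp^T\V{a}_t)^+$ be the full dual objective, so that weak duality gives $D(\bp)\ge\OPT$ for every $\bp\ge0$. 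Let $G$ be the event that for every candidate price $\bp$ (see below) and every constraint $i$ the sample usage $\sum_{t\le s}a_{it}x_t(\bp)$ lies within $\delta\epsilon b_i$ of its mean $\epsilon A_i(\bp)$, with $\delta=\Theta(\epsilon)$. On $G$ two things hold at $\bp=\hat{\bp}$: since the sample solution is feasible for the deflated program \eqref{primalsample}, i.e. its sample usage is $\le(1-\epsilon)\epsilon b_i$, the upper deviation forces $A_i(\hat{\bp})\le(1-\epsilon+\delta)b_i\le b_i$ for \emph{all} $i$ (feasibility), while for the priced constraints $\hat p_i>0$ complementary slackness makes the sample usage tight at $(1-\epsilon)\epsilon b_i$ and the lower deviation yields $A_i(\hat{\bp})\ge(1-O(\epsilon))b_i$ (near-saturation). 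Feasibility means the online rule in step~2(a) rejects nothing for $t>s$, so the algorithm outputs exactly $x_t(\hat{\bp})$ there; writing $\text{SLP}=\sum_{t\le s}\pi_t x_t(\hat{\bp})$ for the value discarded during learning (equal by complementary slackness to the optimum of \eqref{primalsample}), the algorithm's value dominates $V(\hat{\bp})\mathbf{1}_G-\text{SLP}$ pointwise.

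Two expectation estimates finish the reduction. For the discarded value I would invoke weak duality against the \emph{fixed} full optimal dual $\bp^*$: because \eqref{primalsample} has right-hand side $(1-\epsilon)\epsilon b_i$, one has $\text{SLP}\le(1-\epsilon)\epsilon\,{\bp^*}^T\V{b}+\sum_{t\le s}(\pi_t-{\bp^*}^T\V{a}_t)^+$, and since $\bp^*$ is independent of the sample, taking expectations and using $\sum_{t=1}^n(\pi_t-{\bp^*}^T\V{a}_t)^+=\OPT-{\bp^*}^T\V{b}$ gives $\Ex[\text{SLP}]\le\epsilon\,\OPT$. This step needs only nonnegativity of the summands, not any bound on the $\pi_t$, which is exactly why the final condition is independent of the objective magnitudes. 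On $G$ the identity $V(\hat{\bp})=\sum_t(\pi_t-\hat{\bp}^T\V{a}_t)^++\hat{\bp}^T A(\hat{\bp})$ and $D(\hat{\bp})\ge\OPT$ give $\OPT-V(\hat{\bp})\le\hat{\bp}^T(\V{b}-A(\hat{\bp}))\le O(\epsilon)\,\hat{\bp}^T\V{b}$ by near-saturation, and since $\hat{\bp}^T A(\hat{\bp})\le V(\hat{\bp})\le\OPT$ the same near-saturation forces $\hat{\bp}^T\V{b}\le(1+O(\epsilon))\OPT$; hence $V(\hat{\bp})\ge(1-O(\epsilon))\OPT$ on $G$. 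Combining, $\Ex[\text{alg}]\ge\Ex[V(\hat{\bp})\mathbf{1}_G]-\Ex[\text{SLP}]\ge(1-O(\epsilon))\OPT\,\Pr[G]-\epsilon\OPT$, which with explicit constants is the claimed $1-6\epsilon$.

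It remains to lower bound $\Pr[G]$, and this is the heart of the argument. For a \emph{fixed} price $\bp$ the per-column usages $a_{it}x_t(\bp)\in[0,1]$ and the random-permutation assumption make $\sum_{t\le s}a_{it}x_t(\bp)$ an unbiased estimator of $\epsilon A_i(\bp)$; a Chernoff/Hoeffding bound for sampling without replacement then bounds the probability of a $\delta\epsilon b_i$ deviation by roughly $\exp(-\delta^2\epsilon b_i)$. The $(1-\epsilon)$ deflation of the right-hand side in \eqref{primalsample} is precisely the slack that this deviation consumes on the feasibility side, while on the optimality side it costs only the $O(\epsilon)$ term bounded above.

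The essential difficulty, and the step I expect to be hardest, is that $\hat{\bp}$ is \emph{data-dependent}, so this per-price bound cannot be applied at $\bp=\hat{\bp}$ directly. I would resolve this with a covering argument: $x_t(\bp)$ depends on $\bp$ only through the signs of $\pi_t-\bp^T\V{a}_t$, and an optimal dual of \eqref{primalsample} is pinned down by $m$ tight primal constraints, so $\hat{\bp}$ always lies in an explicit set of at most $\binom{n}{m}\le n^m$ candidate prices, one per $m$-subset of columns. Demanding the deviation bound for all of them and all $m$ constraints with total failure mass $\le\epsilon$ forces $\delta^2\epsilon b_i\gtrsim m\log(n/\epsilon)$; taking $\delta=\Theta(\epsilon)$ produces exactly $B\ge6m\log(n/\epsilon)/\epsilon^3$, the three powers of $\epsilon$ arising from the relative accuracy $\delta^2=\epsilon^2$ and from the sample being an $\epsilon$-fraction. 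The delicate points I would have to check carefully are that this finite net genuinely contains the realized $\hat{\bp}$ (handling ties in the allocation rule and degenerate dual bases) and that the $(1-\epsilon)$ deflation, the $\delta$-deviations, and the $O(\epsilon)$ failure mass combine so that every error term collapses into the clean factor $1-6\epsilon$.
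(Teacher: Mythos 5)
Your proposal is correct in substance and shares the paper's overall architecture: learn $\hat{\bp}$ from the deflated sample program \eqref{primalsample}, establish feasibility and near-saturation of the priced constraints via Hoeffding--Bernstein for sampling without replacement plus a union bound over roughly $n^m$ candidate prices, and charge the discarded learning-period value by weak duality against the fixed offline dual --- your $\Ex[\mathrm{SLP}]\le\epsilon\,\OPT$ step is verbatim the paper's Lemma \ref{lem:learning-period-contribution}. Two local steps are genuinely different, though. For near-optimality, the paper observes that $(\{x_t(\hat{\bp})\},\hat{\bp})$ is an optimal primal--dual pair for an auxiliary LP \eqref{revisedprogram} with right-hand side $\hat{b}_i$, and exhibits $(1-3\epsilon)\V{x}^*$ as a feasible point of that LP; you instead run the direct duality-gap computation $\OPT-V(\hat{\bp})\le\hat{\bp}^T(\V{b}-A(\hat{\bp}))\le O(\epsilon)\,\hat{\bp}^T\V{b}$ combined with $\hat{\bp}^T\V{b}\le(1+O(\epsilon))\OPT$. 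Both routes rest on the same two ingredients (complementary slackness and the near-saturation inequality \eqref{eq:deviation-inequality1}); yours avoids introducing the auxiliary LP and is arguably cleaner, at the cost of also invoking feasibility of $x(\hat{\bp})$ (to get $V(\hat{\bp})\le\OPT$), which the paper's Lemma \ref{lem:near-optimal-sample} does not need. Second, for the covering you enumerate dual bases ($m$ tight columns pinning down $\hat{\bp}$, giving $\binom{n}{m}$ candidates), whereas the paper counts equivalence classes of prices by distinct induced allocations, i.e., hyperplane separations of $n$ points, also at most $n^m$; the counts agree up to factors absorbed by $m\log(n/\epsilon)$, but your version additionally needs to account for coordinates with $p_i=0$ and to fix a deterministic rule selecting a basic optimal dual so that the realized $\hat{\bp}$ provably lies in the net.

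Three of the spots you flagged as delicate do require the specific fixes the paper uses, and your constant accounting must absorb them. (i) Tightness: complementary slackness makes the sample \emph{optimum} $\hat{x}$ tight at $(1-\epsilon)\epsilon b_i$ when $\hat{p}_i>0$, not $x_t(\hat{\bp})$ itself; under the general-position Assumption \ref{ties} these differ on up to $m$ columns, so the paper passes through $\sum_{t\in S}a_{it}x_t(\hat{\bp})\ge(1-\epsilon)\epsilon b_i-m\ge(1-2\epsilon)\epsilon b_i$, using $m\le\epsilon^2 B$; this costs one extra $\epsilon$ in the near-saturation bound. (ii) Your two-sided event $G$ (``sample usage within $\delta\epsilon b_i$ of its mean for all candidate prices'') is too strong as stated when $A_i(\bp)\gg b_i$, because the variance proxy in Hoeffding--Bernstein scales with $A_i(\bp)$, not with $b_i$; only one-sided deviations near the thresholds are needed, which the paper obtains by conditioning on $\sum_{t\in N}Y_t=b_i$ together with monotonicity. (iii) $\mathrm{SLP}$ \emph{equals} the optimum of \eqref{primalsample} only up to the same $m$-column discrepancy; the inequality $\sum_{t\le s}\pi_t x_t(\hat{\bp})\le\OPT(S)$ is what holds (since $x_t(\hat{\bp})\le\hat{x}_t$) and is all you need. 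None of these is a fatal gap --- they are exactly the technical care the paper's Lemmas \ref{KKT}, \ref{lem:one-time-constraints} and Appendix \ref{app:deviation-inequality1} supply.
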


\subsection{Competitive Ratio Analysis}
\label{sec:comptetive_ratio_analysis1}

Observe that the one-time learning algorithm waits until time $s=
\epsilon n $, and then sets the solution at time $t$ as
$x_t(\hat{\bp})$, unless it violates the constraints. To prove its
competitive ratio, we follow the following steps. First we show that
if $\bp^*$ is the optimal dual solution to (\ref{primaloffline}),
then $\{x_t(\bp^*)\}$ is close to the primal optimal solution
$\V{x}^*$, i.e., learning the dual price is sufficient to determine
a close primal solution. However, since the columns are revealed in
an online fashion, we are not able to obtain $\bp^*$ during the
decision period. Instead, in our algorithm, we use $\hat{\bp}$ as a
substitute. We then show that $\hat{\bp}$ is a good substitute of
${\bp}^*$: 1) with high probability, $x_t(\hat{\bp})$ satisfies all
the constraints of the linear program; 2) the expected value of
$\sum_t \pi_t x_t(\hat{\bp})$ is close to the optimal offline value.
Before we start our analysis, we make the following simplifying
technical assumption in our discussion:

\begin{assumption}
\label{assumption:ties} The problem inputs are in general position,
namely for any price vector $\bp$, there can be at most $m$ columns
such that $\bp^T\V{a}_t=\pi_t$.
\end{assumption}
Assumption \ref{assumption:ties} is not necessarily true for all
inputs. However, as pointed out by \cite{Devanur}, one can always
randomly perturb $\pi_t$ by arbitrarily small amount $\eta$ through
adding a random variable $\xi_{t}$ taking uniform distribution on
interval $[0,\eta]$. In this way, with probability 1, no $\bp$ can
satisfy $m+1$ equations simultaneously among $\bp^T\V{a}_t=\pi_t$,
and the effect of this perturbation on the objective can be made
arbitrarily small. Under this assumption, we can use the
complementarity conditions of linear program (\ref{primaloffline})
to obtain the following lemma.

\begin{lemma}
\label{KKT} $x_t(\bp^*)\le x^*_t$ for all $t$, and under Assumption
\ref{assumption:ties}, $x^*_t$ and $x_t(\bp^*)$ differs for no more
than $m$ values of $t$.
\end{lemma}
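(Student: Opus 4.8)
The plan is to read off the relationship between $x_t(\bp^*)$ and $x^*_t$ directly from the complementary slackness conditions of \eqref{primaloffline}. First I would write down the dual of \eqref{primaloffline}, introducing a multiplier vector $\bp \ge \bzero$ for the capacity constraints $\sum_t a_{it}x_t \le b_i$ and an additional nonnegative multiplier $q_t$ for each box constraint $x_t \le 1$. The dual objective is $\min \sum_i b_i p_i + \sum_t q_t$, and its constraints read ${\bp}^T\V{a}_t + q_t \ge \pi_t$ for all $t$, with $\bp, \bq \ge \bzero$. Let $(\bp^*, \bq^*)$ denote an optimal dual solution complementary to the primal optimum $\bx^*$.

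The core of the argument is a three-way split on the sign of $\pi_t - {\bp^*}^T\V{a}_t$, which is exactly the quantity defining the threshold rule \eqref{linear:xp}. If $\pi_t < {\bp^*}^T\V{a}_t$, then since $q_t^* \ge 0$ the dual constraint ${\bp^*}^T\V{a}_t + q_t^* \ge \pi_t$ holds strictly, so complementary slackness forces $x^*_t = 0$, which matches $x_t(\bp^*) = 0$. If $\pi_t > {\bp^*}^T\V{a}_t$, then dual feasibility gives $q_t^* \ge \pi_t - {\bp^*}^T\V{a}_t > 0$, so the corresponding box constraint must be tight, i.e.\ $x^*_t = 1$, again matching $x_t(\bp^*) = 1$. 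Thus in both non-degenerate branches $x_t(\bp^*)$ and $x^*_t$ coincide. The only remaining case is the tie $\pi_t = {\bp^*}^T\V{a}_t$, where the rule sets $x_t(\bp^*) = 0$ while $x^*_t$ may take any value in $[0,1]$; here one only has $x_t(\bp^*) = 0 \le x^*_t$.

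Combining the three cases gives $x_t(\bp^*) \le x^*_t$ for every $t$, which establishes the first claim, and shows that $x_t(\bp^*)$ can fail to equal $x^*_t$ only on the tie set $\{\,t : {\bp^*}^T\V{a}_t = \pi_t\,\}$. By Assumption \ref{ties} (general position), this set contains at most $m$ indices, yielding the second claim. I do not expect a genuine obstacle here, since the whole statement is essentially an unpacking of optimality conditions; the only points requiring care are to invoke the correct direction of complementary slackness in each branch (dual slack $\Rightarrow$ primal zero, versus primal interior $\Rightarrow$ dual tight), and to dispose of the boundary tie case cleanly, which is precisely where the general position assumption is consumed.
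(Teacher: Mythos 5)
Your proposal is correct and follows essentially the same route as the paper's proof: both read off $x^*_t$ from the complementary slackness conditions of \eqref{primaloffline} on the two strict-inequality cases $\pi_t \gtrless {\bp^*}^T\V{a}_t$, observe that $x_t(\bp^*)$ and $x^*_t$ can differ only on the tie set $\{t : {\bp^*}^T\V{a}_t = \pi_t\}$, and invoke Assumption \ref{ties} to bound that set by $m$. Your version merely spells out the dual variables $q_t$ for the box constraints explicitly, which the paper's terser argument leaves implicit.
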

{\noindent\bf Proof.} Consider the offline linear program
(\ref{primaloffline}) and its dual (let $\bp$ denote the dual
variables associated with the first set of constraints and $y_t$
denote the dual variables associated with the constraints $x_t \le
1$):
\begin{equation}
\begin{array}{lll}\label{mainmodeldual}
\mbox{minimize} & \sum_{i=1}^m b_ip_i + \sum_{t=1}^n y_t \\
\mbox{subject to} & \sum_{i=1}^m a_{it}p_i + y_t \ge \pi_t, & t = 1,...,n\\
& p_i, y_t \ge 0, & i = 1,...,m,t = 1,...,n.
\end{array}
\end{equation}
By the complementarity slackness conditions, for any optimal
solution $\bx^*$ for the primal problem (\ref{primaloffline}) and
optimal solution $(\bp^*, \by^*)$ for the dual, we must have:
\begin{equation*}
x_t^*\cdot\left(\sum_{i=1}^m a_{it}p_i^* + y_t^* - \pi_t\right) = 0
\quad\quad\mbox{and}\quad\quad (1-x_t^*) \cdot y_t^* = 0 \quad\quad
\mbox{ for all } t.
\end{equation*}
If $x_t(\bp^*) = 1$, by (\ref{linear:xp}), $\pi_t > (\bp^*)^T\ba_t$.
Thus, by the constraint in (\ref{mainmodeldual}), $y^*_t > 0$ and
finally by the last complementarity condition, $x^*_t = 1$.
Therefore, we have $x_t(\bp^*) \le x_t^*$ for all $t$. On the other
hand, if $\pi_t < (\bp^*)^T\ba_t$, then we must have both
$x_t(\bp^*)$ and $x^*_t = 0$. Therefore, $x_t(\bp^*) = x_t^*$ if
$(\bp^*)^T \ba_t \neq \pi_t$. Under Assumption
\ref{assumption:ties}, there are at most $m$ values of $t$ such that
$(\bp^*)^T\ba_t = \pi_t$. Therefore, $x_t^*$ and $x_t(\bp^*)$
differs for no more than $m$ values of $t$. $\hfill\Box$

Lemma \ref{KKT} shows that if an optimal dual solution $\bp^*$ to
(\ref{primaloffline}) is known, then $x_t(\bp^*)$'s obtained by our
decision policy is close to the optimal offline solution. However,
in our online algorithm, we use the {\it sample dual price}
$\hat{\bp}$ learned from the first few inputs, which could be
different from the optimal dual price $\bp^*$. The remaining
discussion attempts to show that the sample dual price $\hat{\bp}$
will be sufficiently accurate for our purpose. In the following, we
will frequently use the fact that the random order assumption can be
interpreted as that the first $s$ inputs are uniform random samples
without replacement of size $s$ from the $n$ inputs. And we use $S$
to denote the sample set of size $s$, and $N$ to denote the complete
input set of size $n$. We start with the following lemma which shows
that with high probability, the primal solution $x_t(\hat{\bp})$
constructed using the sample dual price is feasible:

\begin{lemma}
\label{lem:one-time-constraints} The primal solution constructed
using the sample dual price is a feasible solution to the linear
program (\ref{primaloffline}) with high probability. More precisely,
with probability $1-\epsilon$,
$$\sum_{t=1}^n a_{it}x_t(\hat{\bp}) \le b_i, \ \ \forall i = 1, \ldots, m$$
given $B \ge \frac{6m\log (n/\epsilon)}{\epsilon^3}$.
\end{lemma}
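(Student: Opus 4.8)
The plan is to reduce feasibility on the full instance $N$ to a statement about the polynomially many distinct threshold allocations, and then control each one by a Chernoff bound for sampling without replacement. First I would record the deterministic consequence of $\hat{\bp}$ being the sample-optimal dual: applying Lemma \ref{KKT} to the sample linear program \eqref{primalsample} (which has exactly the form of \eqref{primaloffline}) gives $x_t(\hat{\bp}) \le \hat{x}_t$ for every $t \in S$, where $\hat{\bx}$ is the sample primal optimum. Since $\hat{\bx}$ is feasible for \eqref{primalsample}, this yields the \emph{deterministic} bound
$$\sum_{t \in S} a_{it} x_t(\hat{\bp}) \le \sum_{t \in S} a_{it}\hat{x}_t \le (1-\epsilon)\epsilon\, b_i, \qquad i = 1, \ldots, m.$$
Thus the learned price always satisfies the sample constraints with the shrunk right-hand side $(1-\epsilon)\epsilon b_i$; the factor $(1-\epsilon)$ is precisely what creates the multiplicative gap exploited below.

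Next I would exploit that the allocation $x_t(\bp)$ in \eqref{linear:xp} depends on $\bp$ only through the signs of $\pi_t - \bp^T\V{a}_t$, i.e. through which cell of the arrangement of the $n$ hyperplanes $\{\bp : \bp^T\V{a}_t = \pi_t\}$ in $\mathbb{R}^m$ contains $\bp$. Hence the number of distinct full-instance allocation vectors $(x_1(\bp), \ldots, x_n(\bp))$ is at most the number of cells of such an arrangement, which is $O(n^m)$. Crucially, this set of allocations is fixed by the adversarial data before the random permutation, so it is non-random. I would fix one representative price $\bp^{(\ell)}$ per distinct allocation, $\ell = 1, \ldots, N_m$ with $N_m = O(n^m)$.

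The core is a concentration argument decoupled from the data-dependence of $\hat{\bp}$. For a fixed representative $\bp = \bp^{(\ell)}$ the numbers $c_t := a_{it} x_t(\bp) \in [0,1]$ are fixed, and $\sum_{t\in S} c_t$ is a sum over a uniform size-$s$ sample without replacement with mean $\mu := \frac{s}{n}\sum_t c_t = \epsilon\, A_i(\bp)$, where $A_i(\bp) := \sum_{t=1}^n a_{it} x_t(\bp)$. If the full constraint is violated, $A_i(\bp) > b_i$, then $\mu > \epsilon b_i$, so $(1-\epsilon)\epsilon b_i < (1-\epsilon)\mu$ and the event that $\bp$ nonetheless satisfies the sample bound is contained in $\{\sum_{t\in S}c_t \le (1-\epsilon)\mu\}$, whose probability is at most $\exp(-\epsilon^2\mu/2) \le \exp(-\epsilon^3 b_i/2) \le \exp(-\epsilon^3 B/2)$ by the multiplicative Chernoff lower tail (valid for sampling without replacement via Hoeffding's reduction to the with-replacement case). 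Since the bad event $\{A_i(\hat{\bp}) > b_i \text{ for some } i\}$ forces the representative realizing $\hat{\bp}$'s allocation to lie in the corresponding union, I would union-bound over the $m$ constraints and the $N_m = O(n^m)$ representatives to get failure probability at most $m\,N_m\exp(-\epsilon^3 B/2) \le m\,n^m(\epsilon/n)^{3m} \le \epsilon$ after substituting $B \ge 6m\log(n/\epsilon)/\epsilon^3$.

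The step I expect to be the main obstacle — and the crux of the whole lemma — is the decoupling: a direct concentration bound cannot be applied to $\sum_{t\in S} a_{it} x_t(\hat{\bp})$ because $\hat{\bp}$ is itself a random function of the sample $S$. The resolution is the covering argument, and its success hinges on the hyperplane-arrangement count $O(n^m)$ being small enough that the union bound beats the Chernoff tail; this is exactly where the $m\log(n/\epsilon)$ factor, and hence the stated lower bound on $B$, enters. A minor point to handle with care is aligning the representatives with the sample constraints so that the deterministic bound for $\hat{\bp}$ transfers verbatim to whichever representative realizes its allocation, which is where Assumption \ref{ties} and the coordinatewise domination from Lemma \ref{KKT} are used.
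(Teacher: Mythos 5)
Your proposal follows essentially the same route as the paper's proof: the deterministic sample-feasibility of the learned price via complementary slackness, the covering of all data-dependent prices by the $O(n^m)$ cells of the hyperplane arrangement, a without-replacement tail bound for each fixed representative, and a union bound over the $m n^m$ (price, constraint) pairs yielding failure probability $\epsilon$ under the stated bound on $B$. The only cosmetic difference is that you invoke a multiplicative Chernoff lower tail (via Hoeffding's reduction to the with-replacement case) where the paper applies the Hoeffding--Bernstein inequality directly, and your explicit fixed-representative framing merely makes precise the decoupling that the paper's ``fix $\bp$, then union bound'' phrasing leaves implicit.
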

{\noindent\bf Proof.} The proof will proceed as follows: Consider
any fixed price $\bp$ and $i$. We say a sample $S$ is ``bad'' for
this $\bp$ and $i$ if and only if $\bp$ is the optimal dual price to
(\ref{dualsample}) for the sample set $S$, but $\sum_{t=1}^{n}
a_{it}x_t(\bp) > b_i$. First, we show that the probability of bad
samples is small for every fixed $\bp$ and $i$. Then, we take a
union bound over all distinct prices to prove that with high
probability the learned price $\hat{\bp}$ will be such that
$\sum_{t=1}^{n}a_{it}x_t(\hat{\bp})\le b_i$ for all $i$.

To start with, we fix $\bp$ and $i$. Define $Y_t=a_{it}x_t(\V{p})$.
If $\V{p}$ is an optimal dual solution for the sample linear program
on $S$, applying Lemma \ref{KKT} to the sample problem, we have
\begin{eqnarray*}
\sum_{t\in S} Y_t=\sum_{t\in S} a_{it} x_t({\V{p}}) \le \sum_{t\in
S} a_{it}\tilde{x}_t\le  (1-\epsilon)\epsilon b_i,
\end{eqnarray*}
where $\tilde{\bx}$ is the primal optimal solution to the sample
linear program on $S$. Now we consider the probability of bad
samples for this $\bp$ and $i$:
\begin{align*}
&P\left(\sum_{t\in S} Y_t \le (1-\epsilon)\epsilon b_i, \sum_{t\in
N} Y_t \ge b_i\right).
\end{align*}
We first define $Z_t = \frac{b_iY_t}{\sum_{t\in N} Y_t}$. It is easy
to see that
\begin{align*}
&P\left(\sum_{t\in S} Y_t \le (1-\epsilon)\epsilon b_i, \sum_{t\in
N} Y_t \ge b_i\right) \le P\left(\sum_{t\in S} Z_t \le
(1-\epsilon)\epsilon b_i, \sum_{t\in N}Z_t = b_i\right).
\end{align*}
Furthermore, we have
\begin{align*}
P\left(\sum_{t\in S} Z_t \le (1-\epsilon)\epsilon b_i, \sum_{t\in
N}Z_t = b_i\right)  \le & \mbox{  }P\left(\left|\sum_{t\in S}
Z_t-\epsilon\sum_{t\in N} Z_{t}\right|\ge
\epsilon^2 b_i, \sum_{t\in N} Z_t=b_i\right)\\
\le & \mbox{  }P\left(\left|\sum_{t\in S} Z_t-\epsilon\sum_{t\in N}
Z_{t}\right|\ge
\epsilon^2 b_i \left| \sum_{t\in N} Z_t=b_i\right.\right)\\
\le & \mbox{
}2\exp{\left(\frac{-\epsilon^3b_i}{2+\epsilon}\right)}\le\delta
\end{align*}
where $\delta=  \frac{\epsilon}{m\cdot n^m}$. The second to last
step follows from the Hoeffding-Bernstein's Inequality for sampling
without replacement (Lemma \ref{HB} in Appendix \ref{app:one-time})
by treating $Z_t$, $t\in S$ as the samples without replacement from
$Z_t$, $t\in N$. We also used the fact that $0\le Z_t\le 1$ for all
$t$, therefore $\sum_{t\in N} (Z_t - \bar{Z})^2 \le \sum_{t\in N}
Z_t^2 \le b_i$ (and therefore the $\sigma_R^2$ in Lemma \ref{HB} can
be bounded by $b_i$). Finally, the last inequality is due to the
assumption made on $B$.

Next, we take a union bound over all {\it distinct}  $\V{p}$'s. We
call two price vectors $\bp$ and $\bq$ distinct if and only if they
result in distinct solutions, i.e., $\{x_t(\bp)\} \ne \{x_t(\bq)\}$.
Note that we only need to consider distinct prices, since otherwise
all the $Y_t$'s are exactly the same. Note that each distinct $\bp$
is characterized by a unique separation of $n$ points
($\{\pi_t,\V{a}_t\}_{t=1}^n$) in $m+1$-dimensional space by a
hyperplane. By results from computational geometry, the total number
of such distinct prices is at most $n^m$ (\cite{orlik}). Taking
union bound over the $n^m$ distinct prices, and $i=1,\ldots, m$, we
get the desired result. \hfill $\Box$\newline

Above we showed that with high probability, $x_t(\hat{\V{p}})$ is a
feasible solution. In the following, we show that it is also a
near-optimal solution.

\begin{lemma}
\label{lem:near-optimal-sample} The primal solution constructed
using the sample dual price is a near-optimal solution to the linear
program (\ref{primaloffline}) with high probability. More precisely,
with probability $1-\epsilon$,
\begin{equation*}
\sum_{t\in N} \pi_t x_t(\hat{\bp})\ge (1-3\epsilon) \OPT
\end{equation*}
given $B \ge \frac{6m\log (n/\epsilon)}{\epsilon^3}$.
\end{lemma}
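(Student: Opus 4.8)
The plan is to mirror the proof of Lemma~\ref{lem:one-time-constraints} almost verbatim, replacing the feasibility event by an objective-value event. Fix a price $\bp$ and call a sample $S$ \emph{bad} (for the objective) if $\bp=\hat{\bp}(S)$ yet $\sum_{t\in N}\pi_t x_t(\bp)<(1-3\epsilon)\OPT$. First I would bound the probability of a bad sample for each fixed $\bp$, and then take a union bound over the at most $n^m$ distinct prices (the same covering bound used in the previous lemma, since distinct prices correspond to distinct hyperplane separations of the points $\{(\V{a}_t,\pi_t)\}$). The engine in both the per-price bound and the transfer from the sample to the full instance is the Hoeffding--Bernstein inequality for sampling without replacement, exactly as before.

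The quantitative heart is a dual decomposition of the threshold objective. For any $\bp$ one has $\sum_{t\in N}\pi_t x_t(\bp)=\sum_{t\in N}(\pi_t-\bp^T\V{a}_t)^+ + \bp^T\sum_{t\in N}\V{a}_t x_t(\bp)$, and the analogous identity holds on the sample $S$ with the sample right-hand side $(1-\epsilon)\epsilon\V{b}$. When $\bp=\hat{\bp}(S)$, complementary slackness of the sample LP \eqref{primalsample} tells me that (up to the at most $m$ tie columns of Assumption~\ref{ties}) $\sum_{t\in S}\pi_t x_t(\bp)$ equals the sample optimum and that the active sample constraints are tight, so $\sum_{t\in S}a_{it}x_t(\bp)=(1-\epsilon)\epsilon b_i$ whenever $\hat p_i>0$. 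I would then transfer each of the two terms from $S$ to $N$ by concentration: the budget-usage term because $\sum_{t\in S}a_{it}x_t(\bp)$ concentrates around $\epsilon\sum_{t\in N}a_{it}x_t(\bp)$ (the bounded $[0,1]$ case already handled in the previous lemma), giving near-tight full usage $\sum_{t\in N}a_{it}x_t(\bp)\approx(1-\epsilon)b_i$ on active constraints; and the surplus term $\sum_t(\pi_t-\bp^T\V{a}_t)^+$ by an analogous bound. Collecting the losses---the $(1-\epsilon)$ shrinkage of the sample right-hand side, the $m$-column complementary-slackness gap (which the hypothesis $B\ge 6m\log(n/\epsilon)/\epsilon^3$ forces to be an $O(\epsilon)$ fraction of $\OPT$), and the $O(\epsilon)$ concentration error---yields the factor $1-3\epsilon$, while the failure probability sums to at most $\epsilon$ over the $n^m$ prices and $m$ constraints under the same choice $\delta=\epsilon/(m\,n^m)$.

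The main obstacle is that the surplus term $\sum_t(\pi_t-\bp^T\V{a}_t)^+$ involves the \emph{unbounded} objective coefficients $\pi_t$, so the clean $[0,1]$-bounded concentration that drove Lemma~\ref{lem:one-time-constraints} does not apply to it directly; indeed the raw objective need not concentrate at all when $\OPT$ is carried by a few large columns. I would resolve this in two complementary ways. First, each column is individually feasible, since $\V{a}_t\in[0,1]^m$ and $\V{b}\ge B\ge 1$ make $x_t=1$ alone feasible, whence $\pi_t\le\OPT$ for every $t$; normalizing the objective by $\OPT$ puts the terms in $[0,1]$, and I can fold the objective in as one extra ``row'' inside the same covering argument at the cost of only a logarithmic factor. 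Second, and crucially for the degenerate case, a price $\bp$ that rejects a large share of the value is optimal for the random sample---i.e.\ satisfies $\bp=\hat{\bp}(S)$---only with small probability, so the joint bad-sample probability stays small even where the objective itself fails to concentrate. Making these two effects quantitatively match the stated $B$-only condition (with no dependence on $\pi_{\max}/\pi_{\min}$) is the delicate step, and I expect the cruder alternative of simply normalizing by $\OPT$ to be what later forces the $\lambda=\log\log(\pi_{\max}/\pi_{\min})$ term to reappear in Corollary~\ref{cor:main}.
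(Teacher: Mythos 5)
Your consumption-side analysis is exactly right and matches the paper: complementary slackness on the sample LP gives $\sum_{t\in S}a_{it}\hat{x}_t=(1-\epsilon)\epsilon b_i$ on active constraints, Lemma~\ref{KKT} costs at most $m$ (absorbed via $B\ge m/\epsilon^2$), and Hoeffding--Bernstein plus the $n^m$ covering bound transfers this to $\sum_{t\in N}a_{it}x_t(\hat{\bp})\ge(1-3\epsilon)b_i$, which is precisely the paper's inequality \eqref{eq:deviation-inequality1}. The genuine gap is your treatment of the surplus term $\sum_{t\in N}(\pi_t-\bp^T\V{a}_t)^+$: you plan to estimate it from its sample counterpart by concentration, and neither of your repairs works. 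Normalizing by $\OPT$ does not help: if a single column carries $\pi_t=\Theta(\OPT)$, the sample surplus swings by $\Theta(\OPT)$ according to whether that column lands in $S$ (an event of probability $\epsilon$ versus $1-\epsilon$), so no multiplicative $1\pm O(\epsilon)$ concentration with small failure probability can hold under a condition on $B$ alone --- the deviation is of order $\OPT$ itself with constant probability. Your second route (``a price rejecting a large share of value is rarely sample-optimal'') is not a sketch of an argument but exactly the unproved delicate step. Your closing guess is also off: the $\lambda$ in Corollary~\ref{cor:main} arises from preprocessing the input points to shrink the covering number $n^m$, not from normalizing the objective.

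The paper's proof shows the surplus never needs to be estimated, because it cancels by exact LP duality. Setting $\hat{b}_i=\sum_{t\in N}a_{it}x_t(\hat{\bp})$ when $\hat{p}_i>0$ and $\hat{b}_i=b_i$ otherwise, the pair $\bigl(\{x_t(\hat{\bp})\},\hat{\bp}\bigr)$ satisfies all complementarity conditions for the full-horizon LP \eqref{revisedprogram} with right-hand side $\hat{\V{b}}$, hence is \emph{exactly optimal} for it; once \eqref{eq:deviation-inequality1} holds, $(1-3\epsilon)\V{x}^*$ is feasible for \eqref{revisedprogram}, so $\sum_{t\in N}\pi_t x_t(\hat{\bp})\ge(1-3\epsilon)\OPT$ with no appeal to concentration of the $\pi_t$'s. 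You can see the cancellation inside your own decomposition: with $y_t=(\pi_t-\hat{\bp}^T\V{a}_t)^+$, the pair $(\hat{\bp},\V{y})$ is dual feasible for the \emph{full} LP \eqref{primaloffline} no matter how $S$ was drawn, so weak duality gives $\hat{\bp}^T\V{b}+\sum_{t\in N}y_t\ge\OPT$, while the threshold rule \eqref{linear:xp} gives the exact identity $\sum_{t\in N}\pi_t x_t(\hat{\bp})=\sum_{t\in N}y_t+\sum_i\hat{p}_i\hat{b}_i$; the surplus $\sum_{t\in N}y_t$ appears with coefficient one on both sides and drops out, leaving only the consumption estimate $\hat{b}_i\ge(1-3\epsilon)b_i$ that you already proved. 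So the fix is not sharper concentration but this duality observation; with it, your write-up collapses into the paper's proof.
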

{\noindent\bf Proof.} The proof is based on two observations. First,
$\{x_t(\hat{\bp})\}_{t=1}^n$ and $\hat{\bp}$ satisfy all the
complementarity conditions, and hence is the optimal primal and dual
solution to the following linear program:
\begin{equation}
\begin{array}{lll}\label{revisedprogram}
{\rm maximize} & \sum_{t\in N} \pi_t x_t & \\
\mbox{subject to} & \sum_{t\in N} a_{it}x_t \le \hat{b}_i, & i=1, \ldots, m\\
& 0 \le x_t \le 1, & t=1, \ldots, n
\end{array}
\end{equation}
where $\hat{b}_i = \sum_{t\in N} a_{it}x_t(\hat{\bp})$ if $\hat{p}_i
>0$, and $\hat{b}_i = \max\{\sum_{t\in N} a_{it}x_t(\hat{\bp}),b_i\}$, if $\hat{p}_i=0$.

Second, we show that if $\hat{p}_i>0$, then with probability
$1-\epsilon$, $\hat{b}_i \ge (1-3\epsilon)b_i$. To show this, let
$\hat{\bp}$ be the optimal dual solution of the sample linear
program on set $S$ and $\hat{\bx}$ be the optimal primal solution.
By the complementarity conditions of the linear program, if
$\hat{p}_i>0$, the $i^{th}$ constraint must be satisfied with
equality. That is, $\sum_{t\in S} a_{it} \hat{x}_t =
(1-\epsilon)\epsilon b_i$. Then, by Lemma \ref{KKT} and the
condition that $B=\min_i b_i \ge \frac{m}{\epsilon^2}$, we have
\begin{equation*}
\label{eq:inequality0} \sum_{t\in S} a_{it} x_t(\hat{\bp}) \ge
\sum_{t\in S} a_{it} \hat{x}_t -m\ge (1-2\epsilon)\epsilon b_i.
\end{equation*}
Then, using the Hoeffding-Bernstein's Inequality for sampling
without replacement, in a manner similar to the proof of Lemma
\ref{lem:one-time-constraints}, we can show that (the detailed proof
is given in Appendix \ref{app:deviation-inequality1}) given the
lower bound on $B$, with probability at least $1-\epsilon$, for all
$i$ such that $\hat{p}_i
> 0$:
\begin{equation}
\label{eq:deviation-inequality1} \hat{b}_i = \sum_{t\in N} a_{it}
x_t(\hat{\bp}) \ge (1-3\epsilon)b_i.
\end{equation}
Combined with the case $\hat{p}_i =0$, we know that with probability
$1-\epsilon$, $\hat{b}_i \ge (1-3\epsilon)b_i$ for all $i$. Lastly,
observing that whenever (\ref{eq:deviation-inequality1}) holds,
given an optimal solution $\bx^*$ to (\ref{primaloffline}),
$(1-3\epsilon)\bx^*$ will be a feasible solution to
(\ref{revisedprogram}). Therefore, the optimal value of
(\ref{revisedprogram}) is at least $(1-3\epsilon)\OPT$, which is
equivalently saying that
\begin{eqnarray*}
\sum_{t=1}^n\pi_tx_t(\hat{\V{p}})\ge(1-3\epsilon)\OPT.
\end{eqnarray*}\hfill$\Box$

Therefore, the objective value for the online solution taken over
the entire period is near-optimal. However, in the one-time learning
algorithm, no decision is made during the learning period $S$, and
only the decisions from periods $\{s+1, \ldots, n\}$ contribute to
the objective value. The following lemma that relates the optimal
value of the sample linear program (\ref{primalsample}) to the
optimal value of the offline linear program (\ref{primaloffline})
will bound the contribution from the learning period:
\begin{lemma}
\label{lem:learning-period-contribution} Let $\OPT(S)$ denote the
optimal value of the linear program (\ref{primalsample}) over sample
$S$, and $\OPT(N)$ denote the optimal value of the offline linear
program (\ref{primaloffline}) over $N$. Then,
$$\Ex[\OPT(S)] \le\epsilon \OPT(N).$$
\end{lemma}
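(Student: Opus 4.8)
The plan is to exploit the random-permutation symmetry by comparing the sample LP against a \emph{partition} of the whole instance into $1/\epsilon$ statistically identical pieces. Assuming for simplicity (as in the surrounding discussion) that $s=n\epsilon$ and that $1/\epsilon$ is an integer dividing $n$, I would split the index set $N$ into $1/\epsilon$ disjoint blocks $S_1,\ldots,S_{1/\epsilon}$, each of size $s$. Under the random order assumption each individual block is, marginally, a uniformly random sample of size $s$ drawn without replacement from $N$, exactly like the learning window $S$. Hence the sample linear program \eqref{primalsample} built on any single block $S_j$ has the same distribution as the one built on $S$, which gives the symmetry identity $\Ex[\OPT(S_j)]=\Ex[\OPT(S)]$ for every $j$, and therefore $\sum_{j=1}^{1/\epsilon}\Ex[\OPT(S_j)]=\tfrac{1}{\epsilon}\Ex[\OPT(S)]$.

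The core step is a \emph{merging} argument showing that the block optima cannot collectively exceed $\OPT(N)$ in any single realization. For each block let $\bx^{S_j}$ be an optimal solution of \eqref{primalsample} restricted to $S_j$, so that $\sum_{t\in S_j} a_{it}x^{S_j}_t\le (1-\epsilon)\tfrac{s}{n}b_i=(1-\epsilon)\epsilon\, b_i$ for all $i$ (recall $s/n=\epsilon$). Since the blocks partition $N$, I would paste these solutions together into a single vector $\bx$ on $N$ by setting $x_t=x^{S_j}_t$ whenever $t\in S_j$. Summing the per-block constraints gives $\sum_{t\in N}a_{it}x_t\le \tfrac{1}{\epsilon}(1-\epsilon)\epsilon\, b_i=(1-\epsilon)b_i\le b_i$, so $\bx$ is feasible for the full offline program \eqref{primaloffline}, and its objective value equals $\sum_j \OPT(S_j)$. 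This yields the deterministic bound $\sum_{j=1}^{1/\epsilon}\OPT(S_j)\le \OPT(N)$, valid for every permutation.

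Taking expectations and combining with the symmetry identity then closes the argument: $\tfrac{1}{\epsilon}\Ex[\OPT(S)]=\sum_{j}\Ex[\OPT(S_j)]=\Ex\big[\sum_{j}\OPT(S_j)\big]\le \OPT(N)$, i.e. $\Ex[\OPT(S)]\le \epsilon\,\OPT(N)$, as claimed. The two ingredients I am relying on are entirely elementary: the distributional symmetry of the blocks (from Assumption~1) and the fact that the deflated right-hand side $(1-\epsilon)\tfrac{s}{n}\bb$ was chosen precisely so that $1/\epsilon$ copies of it sum to at most $\bb$, making the pasted solution feasible.

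The part I expect to require the most care is not the merging argument but the bookkeeping around the ceiling $s=\lceil n\epsilon\rceil$ and the possibility that $1/\epsilon$ is neither an integer nor a divisor of $n$. I would handle this by taking $\lfloor n/s\rfloor$ full blocks and discarding the $O(s)$ leftover columns (whose contribution to the merged solution is only dropped, which can only decrease the left-hand side and hence preserves feasibility), so that the deterministic inequality degrades to $\lfloor n/s\rfloor\,\Ex[\OPT(S)]\le \OPT(N)$; since $\lfloor n/s\rfloor\ge \tfrac{1}{\epsilon}-1$, this still gives $\Ex[\OPT(S)]\le (1+O(\epsilon))\epsilon\,\OPT(N)$, which is the statement up to the absorbed constant. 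Everything else is a direct application of the partition-and-paste idea.
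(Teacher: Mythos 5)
Your proof is correct, but it takes a genuinely different route from the paper's. The paper argues by weak duality: since $S\subseteq N$, the optimal dual solution $(\bp^*,\V{y}^*)$ of the full program \eqref{primaloffline} is feasible for the dual of the sample program \eqref{primalsample}, so $\OPT(S)\le (1-\epsilon)\epsilon\,\V{b}^T\bp^*+\sum_{t\in S}y^*_t$; the exact identity $\Ex[\sum_{t\in S}y^*_t]=\frac{s}{n}\sum_{t\in N}y^*_t$ (each $t$ lands in $S$ with probability $s/n$) plus strong duality on $N$ then finishes. You instead partition $N$ into $1/\epsilon$ blocks, use exchangeability to equate $\Ex[\OPT(S_j)]=\Ex[\OPT(S)]$ (valid, since $\OPT(\cdot)$ depends only on the set of columns in the block), and paste the per-block optima into a single primal-feasible solution of \eqref{primaloffline}; your core inequality $\sum_j\OPT(S_j)\le\OPT(N)$ holds deterministically for every permutation. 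Your route is more elementary (no duality at all), and it makes transparent why the right-hand side of \eqref{primalsample} is deflated to $(1-\epsilon)\frac{s}{n}\V{b}$: the blocks' budgets must sum to at most $\V{b}$. The paper's route buys exactness for arbitrary sample sizes: it gives $\Ex[\OPT(S)]\le\frac{s}{n}\OPT(N)$ with no divisibility bookkeeping and no $(1+O(\epsilon))$ slack, and it transfers verbatim to Lemma \ref{lem:learning-period-contribution2}, where the same bound is needed for every $\ell$ in the geometric schedule (your pasting argument also works there under the paper's assumption that $\epsilon=2^{-E}$ and all $\ell=2^r n\epsilon$ are integers, since then $n/\ell$ is an integer). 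One small caution on your final rounding paragraph: the claim $\lfloor n/s\rfloor\ge \frac{1}{\epsilon}-1$ with $s=\lceil n\epsilon\rceil$ is not literally true for all inputs (e.g. $n=15$, $\epsilon=0.1$ gives $\lfloor 15/2\rfloor=7<9$); it does hold once $n\epsilon^2\ge 1$, which is automatic in the theorem's regime because WLOG $b_i\le n$ and hence $n\ge B\ge 6m\log(n/\epsilon)/\epsilon^3$, and in any case the paper sidesteps the issue by assuming $s=n\epsilon$ exactly, under which your argument yields the lemma with the stated constant.
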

{\noindent\bf Proof.} Let $(\V{x}^*, \bp^*, \V{y}^*)$ and
$(\hat{\V{x}}, \hat{\bp}, \hat{\V{y}})$ denote the optimal primal
and dual solutions of linear program (\ref{primaloffline}) on $N$,
and sample linear program (\ref{primalsample}) on $S$, respectively.
$$
\begin{array}{llll}
(\bp^*, \V{y}^*) = & \arg \min & \V{b}^T\bp + \sum_{t\in N} y_t & \\
& \mbox{s.t.} & \bp^T\V{a}_t + y_t \ge \pi_t, & \mbox{  } t\in N\\
& & \bp, \V{y} \ge 0 &
\end{array}
\
\begin{array}{llll}
(\hat{\bp}, \hat{\V{y}}) = & \arg \min & (1-\epsilon)\epsilon\V{b}^T\bp + \sum_{t\in S} y_t & \\
& \mbox{s.t.} & \bp^T\V{a}_t + y_t \ge \pi_t,\mbox{   }t\in S\\
& & \bp, \V{y} \ge 0. &
\end{array}
$$
Note that $S \subseteq N$, thus $(\bp^*, \V{y}^*)$ is a feasible
solution to the dual of the linear program on $S$. Therefore, by the
weak duality theorem:
\begin{eqnarray*}
\OPT(S) \le \epsilon\V{b}^T \bp^* + \sum_{t\in S} y^*_t.
\end{eqnarray*}
Therefore,
\begin{eqnarray*}
\Ex[\OPT(S)] \le \epsilon\V{b}^T \bp^* + \Ex\left[\sum_{t\in S}
y^*_t\right] = \epsilon(\V{b}^T \bp^* + \sum_{t\in N}y^*_t) =
\epsilon\OPT(N).\quad\quad\Box
\end{eqnarray*}

Now, we are ready to prove Proposition \ref{prop:one-time}:

{\bf\noindent Proof of Proposition \ref{prop:one-time}:} Using Lemma
\ref{lem:one-time-constraints} and Lemma
\ref{lem:near-optimal-sample}, with probability at least
$1-2\epsilon$, the following events happen:
$$\sum_{t=1}^n a_{it}x_t(\hat{\bp}) \le b_i, \ \ \ i = 1, \ldots, m$$
$$\sum_{t=1}^n \pi_t x_t(\hat{\bp})\ge (1-3\epsilon)OPT.$$
That is, the decisions $x_t(\hat{\bp})$ are feasible and the
objective value taken over the entire period $\{1, \ldots, n\}$ is
near-optimal. Denote this event by ${\cal E}$, where $P({\cal E})
\ge 1-2\epsilon$. We have by Lemma \ref{lem:one-time-constraints},
\ref{lem:near-optimal-sample} and
\ref{lem:learning-period-contribution}:
\begin{eqnarray*}
\Ex\left[\sum_{t=s+1}^n\pi_tx_t\right] & = &
\Ex\left[\sum_{t=1}^n\pi_tx_t -
\sum_{t=1}^s\pi_tx_t\right] \\
& \ge & \Ex\left[\sum_{t=1}^n\pi_tx_t(\hat{\V{p}}) I({\cal
E})\right] - \Ex\left[\sum_{t=1}^s \pi_tx_t(\hat{\V{p}})\right]\\
& \ge & (1-3\epsilon)P({\cal E})\OPT - \epsilon \OPT \ge
(1-6\epsilon)\OPT
\end{eqnarray*}
where $I(\cdot)$ is the indicator function, the first inequality is
because under ${\cal E}$, $x_t = x_t(\hat{\bp})$, and the second
last inequality uses the fact that $x_t(\hat{\bp}) \le \hat{x}_t$
which is due to Lemma \ref{KKT}. $\hfill\Box$

%

\section{Dynamic Learning Algorithm}
\label{sec:dynamic}

The algorithm discussed in Section \ref{sec:one-time} uses the first
$\epsilon n$ inputs to learn a threshold price, and then applies it
in the remaining time horizon. While this algorithm has its own
merits, in particular, requires solving only a small linear program
defined on $\epsilon n$ variables, the lower bound required on $B$
is stronger than that claimed in Theorem \ref{th:main} by an
$\epsilon$ factor.

In this section, we propose an improved dynamic learning algorithm
that will achieve the result in Theorem \ref{th:main}. Instead of
computing the price only once, the dynamic learning algorithm will
update the price every time the history doubles, that is, it learns
a new price at time $t = \epsilon n, 2\epsilon n, 4\epsilon n,
\ldots$. To be precise, let $\hat{\V{p}}^{\ell}$ denote the optimal
dual solution for the following partial linear program defined on
the inputs until time $\ell$:
\begin{equation}
\begin{array}{lll}\label{lstep}
\mbox{maximize} &\sum_{t=1}^{\ell} \pi_t x_t & \\
\mbox{subject to}& \sum_{t=1}^{\ell} a_{it}x_t\le
(1-h_{\ell})\frac{\ell}{n} b_i, & i=1,\ldots, m\\
& 0 \le x_t \le 1, & t=1, \ldots, \ell
\end{array}
\end{equation}
where the set of numbers $h_{\ell}$ are defined as follows:
\begin{equation*}
\begin{array}{l}\label{definitionofh}
h_{\ell}=\epsilon\sqrt{\frac{n}{\ell}}.
\end{array}
\end{equation*}
Also, for any given dual price vector $\V{p}$, we define the same
allocation rule $x_t(\bp)$ as in (\ref{linear:xp}). Our dynamic
learning algorithm is stated as follows:\\

{\bf\noindent Algorithm DLA (Dynamic Learning Algorithm):}
\begin{enumerate}
\item Initialize $t_0= \epsilon n$. Set $x_t = 0$, for all
$t\le t_0$.
\item Repeat for $t=t_0+1, t_0+2, \ldots$
\begin{enumerate}
\item Set $\hat{x}_t = x_t(\hat{\bp}^{\ell})$. Here $\ell= 2^r\epsilon n$ where $r$ is the largest integer such that $\ell<t$.
\item If $a_{it}\hat{x}_t\le b_i-\sum_{j=1}^{t-1}a_{ij}x_j$ for all $i$,
then set $x_t=\hat{x}_t$; otherwise, set $x_t=0$. Output $x_t$.\\
\end{enumerate}
\end{enumerate}

Note that we update the dual price vector $\lceil
\log_{2}{(1/\epsilon)}\rceil$ times during the entire time horizon.
Thus, the dynamic learning algorithm requires more computation.
However, as we show next, it requires a weaker lower bound on $B$
for proving the same competitive ratio. The intuition behind this
improvement is as follows. Note that initially, at $\ell=\epsilon
n$, $h_{\ell} = \sqrt{\epsilon} > \epsilon$. Thus, we have larger
slacks at the beginning, and the large deviation argument for
constraint satisfaction (as in Lemma \ref{lem:one-time-constraints})
requires a weaker condition on $B$. As $t$ increases, $\ell$
increases, and $h_{\ell}$ decreases. However, for larger values of
$\ell$, the sample size is larger, making a weaker condition on $B$
sufficient to prove the same error bound. Furthermore, $h_{\ell}$
decreases rapidly enough, such that the overall loss on the
objective value is not significant. As one will see, the careful
choice of the numbers $h_{\ell}$ plays an important role in proving
our results.

\subsection{Competitive Ratio Analysis}
\label{sec:competitive_ratio_analysis2}

The analysis for the dynamic learning algorithm proceeds in a manner
similar to that for the one-time learning algorithm. However,
stronger results for the price learned in each period need to be
proved here. In the following, we assume $\epsilon = 2^{-E}$ and let
$L=\{\epsilon n, 2\epsilon n, \ldots, 2^{E-1}\epsilon n\}$.

Lemma \ref{prop:constraints} and \ref{prop:objective1} are parallel
to Lemma \ref{lem:one-time-constraints} and
\ref{lem:near-optimal-sample} in Section \ref{sec:one-time}, however
require a weaker condition on $B$:
\begin{lemma}
\label{prop:constraints} For any $\epsilon>0$, with probability
$1-\epsilon$:
$$\sum_{t=\ell+1}^{2\ell} a_{it} x_t(\hat{\bp}^{\ell})\le \frac{\ell}{n}b_i, \ \ \ \ \mbox{ for all } i\in \{1, \ldots, m\}, \ \ell \in L$$
given $B=\min_i b_i \ge \frac{10m\log{(n/\epsilon)}}{\epsilon^2}$.
\end{lemma}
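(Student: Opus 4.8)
The plan is to adapt the bad-sample union-bound argument of Lemma~\ref{lem:one-time-constraints} to the nested-sampling structure created by the dynamic updates. Fix a block $\ell\in L$, a coordinate $i\in\{1,\dots,m\}$, and a candidate price $\bp$. Recall that $\hat{\bp}^{\ell}$ is the optimal dual of the partial program \eqref{lstep} on the first $\ell$ columns, whose $i$th right-hand side is $(1-h_{\ell})\frac{\ell}{n}b_i$; hence by complementary slackness, on the event $\{\hat{\bp}^{\ell}=\bp\}$ we have $\sum_{t=1}^{\ell}a_{it}x_t(\bp)\le (1-h_{\ell})\frac{\ell}{n}b_i$. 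I would call the realization \emph{bad} for $(\bp,i,\ell)$ if $\hat{\bp}^{\ell}=\bp$ yet $\sum_{t=\ell+1}^{2\ell}a_{it}x_t(\bp)>\frac{\ell}{n}b_i$. As in the one-time proof, the scheme is to first bound the probability of a bad realization for each fixed triple and then union bound over all distinct prices, all $i$, and all $\ell\in L$.

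For the per-triple bound I would exploit the exchangeability supplied by Assumption~1. Conditioning on the \emph{set} $T_{\ell}$ of the first $2\ell$ arriving columns, the first $\ell$ of them form a uniformly random subset of size $\ell$, so that, writing $Y_t=a_{it}x_t(\bp)$, the first-block sum $F=\sum_{t=1}^{\ell}Y_t$ and the second-block sum $V=\sum_{t=\ell+1}^{2\ell}Y_t$ are the two sampling-without-replacement halves of the fixed total $W=F+V$, each with conditional mean $W/2$. Combining $F\le(1-h_{\ell})\frac{\ell}{n}b_i$ with $V>\frac{\ell}{n}b_i$ and using $V-W/2=W/2-F$ shows that the second block must deviate above its conditional mean by more than $\tfrac12 h_{\ell}\frac{\ell}{n}b_i$, \emph{regardless of} $W$. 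Applying the Hoeffding--Bernstein inequality for sampling without replacement (Lemma~\ref{HB}) then bounds the conditional probability of the bad event.

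The crux, and the step I expect to be the main obstacle, is the bookkeeping in this concentration estimate, in particular controlling the variance term, which scales like $W$. The deviation threshold is $t=\tfrac12 h_{\ell}\frac{\ell}{n}b_i=\tfrac12\epsilon\sqrt{\ell/n}\,b_i$, and the worst case is the moderate regime $W=\Theta(\frac{\ell}{n}b_i)$, where the per-block variance is $\Theta(\frac{\ell}{n}b_i)$; larger $W$ only forces a proportionally larger required deviation and hence a smaller probability. Here the definition $h_{\ell}=\epsilon\sqrt{n/\ell}$ is tailored precisely so that the exponent $\Theta(t^2/\sigma^2)$ collapses to $\Theta(\epsilon^2 b_i)$, independent of $\ell$, which is exactly why the dynamic schedule only needs $B=\Omega(m\log n/\epsilon^2)$ rather than the $1/\epsilon^3$ of the one-time scheme. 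With $B\ge 20m\log n/\epsilon^2$ this makes the per-triple failure probability at most $2\exp(-\Theta(\epsilon^2 b_i))\le 2n^{-\Theta(m)}$.

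Finally I would close with the union bound. By the arrangement bound from computational geometry \cite{Orlik}, the number of distinct allocation patterns $\{x_t(\bp)\}$, and hence the number of prices that must be considered, is at most $n^{m}$; multiplying by the $m$ coordinates and the $|L|=\lceil\log_2(1/\epsilon)\rceil\le\log n$ blocks keeps the total union factor polynomial in $n$, so the choice of the constant $20$ in the hypothesis on $B$ drives the overall failure probability below $\epsilon$. This yields the stated simultaneous guarantee over all $i\in\{1,\dots,m\}$ and all $\ell\in L$.
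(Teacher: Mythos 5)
Your proposal follows essentially the same route as the paper's proof: fix a triple $(\bp,i,\ell)$, use complementary slackness of the partial program \eqref{lstep} to pin the first-block sum at $\le(1-h_\ell)\frac{\ell}{n}b_i$, apply the Hoeffding--Bernstein inequality for sampling $\ell$ items without replacement from the first $2\ell$, exploit that $h_\ell=\epsilon\sqrt{n/\ell}$ collapses the exponent to $\Theta(\epsilon^2 b_i)$ uniformly in $\ell$, and union bound over the $n^m$ distinct prices, $m$ coordinates, and $E$ blocks. The only difference is bookkeeping: the paper splits the bad event into two sub-events (total $\sum_{t=1}^{2\ell}Y_t\ge\frac{2b\ell}{n}$ versus small total with a first-half deviation) and applies the inequality twice, whereas you condition on the block total $W$ and absorb both regimes into a single deviation bound via the identity $V-W/2=W/2-F$ --- a valid, slightly tidier variant of the same argument.
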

{\noindent\bf Proof.} The proof is similar to the proof of Lemma
\ref{lem:one-time-constraints} but a more careful analysis is
needed. We provide a brief outline here with a detailed proof in
Appendix \ref{proof: dynamic}. First, we fix $\V{p}$, $i$ and
$\ell$. This time, we say a permutation is ``bad'' for this $\V{p}$,
$i$ and $\ell$ if and only if $\V{p}=\hat{\V{p}}^l$ (i.e., $\V{p}$
is the learned price under the current arrival order) but
$\sum_{t=\ell+1}^{2\ell}a_{it}x_t(\hat{\V{p}}^l)>\frac{l}{n}b_i$. By
using the Hoeffding-Bernstein's Inequality for sampling without
replacement, we show that the probability of ``bad'' permutations is
less than $\delta=\frac{\epsilon}{m\cdot n^m\cdot E}$ for any fixed
$\V{p}$, $i$ and $\ell$ under the condition on $B$. Then by taking a
union bound over all distinct prices, all items $i$ and periods
$\ell$, the lemma is proved. \hfill $\Box$\newline

In the following, we use $\LP_{s}(\V{d})$ to denote the partial
linear program that is defined on variables till time $s$ with
right-hand-side in the inequality constraints set as $\V{d}$. That
is,
$$ \LP_s(\V{d}): \ \ \ \begin{array}{lll}
{\rm maximize} & \sum_{t=1}^{s} \pi_t x_t &\\
\mbox{subject to} & \sum_{t=1}^{s} a_{it} x_t \le d_i, & i=1, \ldots, m\\
& 0\le x_t \le 1, & t=1, \ldots, s.
\end{array}
$$
And let $\OPT_s(\V{d})$ denote the optimal objective value for
$\LP_s(\V{d})$.
\begin{lemma}
\label{prop:objective1} With probability at least $1-\epsilon$, for
all $\ell \in L$:
$$\sum_{t=1}^{2\ell} \pi_t x_t(\hat{\bp}^{\ell}) \ge (1-2h_{\ell}-\epsilon)\OPT_{2\ell}\left(\frac{2\ell}{n}\V{b}\right)$$
given $B=\min_i b_i \ge \frac{10m\log{(n/\epsilon)}}{\epsilon^2}$.
\end{lemma}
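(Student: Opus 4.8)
The plan is to mirror the proof of Lemma~\ref{lem:near-optimal-sample}, replacing the full set $N$ by the first $2\ell$ arrivals and the sample-LP slack $\epsilon$ by $h_\ell$, and then to union-bound the resulting high-probability event over all $\ell\in L$. Fix $\ell$ and, for a coordinate $i$, write $Y_t=a_{it}x_t(\hat{\bp}^{\ell})$. First I would argue that $\{x_t(\hat{\bp}^{\ell})\}_{t=1}^{2\ell}$ together with $\hat{\bp}^{\ell}$ is an optimal primal--dual pair for $\LP_{2\ell}(\hat{\V{b}}^{\ell})$, where $\hat{b}_i^{\ell}:=\sum_{t=1}^{2\ell}a_{it}x_t(\hat{\bp}^{\ell})$ when $\hat{p}_i^{\ell}>0$ and $\hat{b}_i^{\ell}:=\frac{2\ell}{n}b_i$ when $\hat{p}_i^{\ell}=0$. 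The allocation rule~\eqref{linear:xp} automatically supplies dual feasibility and complementary slackness for the box constraints; the tightness of the $i$-th constraint whenever $\hat{p}_i^{\ell}>0$ holds by construction; and primal feasibility in the coordinates with $\hat{p}_i^{\ell}=0$ follows, on the event of Lemma~\ref{prop:constraints}, by combining the second-half bound $\sum_{t=\ell+1}^{2\ell}a_{it}x_t(\hat{\bp}^{\ell})\le\frac{\ell}{n}b_i$ with the first-half bound $\sum_{t=1}^{\ell}a_{it}x_t(\hat{\bp}^{\ell})\le(1-h_\ell)\frac{\ell}{n}b_i+m\le\frac{\ell}{n}b_i$ derived below.

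The technical heart is to show that, with high probability, $\hat{b}_i^{\ell}\ge(1-2h_\ell-\epsilon)\frac{2\ell}{n}b_i$ for every $i$ with $\hat{p}_i^{\ell}>0$. For such $i$, complementary slackness in the sample program~\eqref{lstep} makes the $i$-th constraint tight, so $\sum_{t=1}^{\ell}a_{it}\hat{x}_t=(1-h_\ell)\frac{\ell}{n}b_i$, and Lemma~\ref{KKT} (the sample primal $\hat{x}$ and $x_t(\hat{\bp}^{\ell})$ differ on at most $m$ indices) gives $\sum_{t=1}^{\ell}Y_t\ge(1-h_\ell)\frac{\ell}{n}b_i-m$. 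Since $B\ge 20m\log n/\epsilon^2$ forces $m\ll\epsilon\frac{\ell}{n}b_i$ for every $\ell\ge n\epsilon$, this additive $m$ is absorbed into the $\epsilon$ slack. It remains to transport the bound from the first $\ell$ to the first $2\ell$ arrivals, which I would do exactly as in Lemma~\ref{prop:constraints}: a deficit $\sum_{t=1}^{2\ell}Y_t<(1-2h_\ell-\epsilon)\frac{2\ell}{n}b_i$ together with the first-half lower bound forces $|\sum_{t=1}^{\ell}Y_t-\tfrac12\sum_{t=1}^{2\ell}Y_t|\ge(h_\ell+\epsilon)\frac{\ell}{n}b_i-m\ge h_\ell\frac{\ell}{n}b_i$, and the Hoeffding--Bernstein inequality (Lemma~\ref{HB}, with $R=2\ell$, $\sigma_R^2\le b_i/n$, $\Delta_R\le1$) bounds the probability of this large-deviation event by $\delta=\epsilon/(m\cdot n^m\cdot E)$.

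I would then close the argument by the scaling trick of Lemma~\ref{lem:near-optimal-sample}. Taking a union bound over the $n^m$ distinct prices, the $m$ coordinates, and the $E=|L|$ values of $\ell$ gives $\hat{b}_i^{\ell}\ge(1-2h_\ell-\epsilon)\frac{2\ell}{n}b_i$ simultaneously for all $i$ and $\ell$ with probability $1-\epsilon$. On this event, if $\V{x}^*$ is optimal for $\LP_{2\ell}(\frac{2\ell}{n}\V{b})$, then $(1-2h_\ell-\epsilon)\V{x}^*$ is feasible for $\LP_{2\ell}(\hat{\V{b}}^{\ell})$ (the scaling factor lies in $[0,1]$ for small $\epsilon$, so the box constraints are preserved), whence $\sum_{t=1}^{2\ell}\pi_t x_t(\hat{\bp}^{\ell})=\OPT_{2\ell}(\hat{\V{b}}^{\ell})\ge(1-2h_\ell-\epsilon)\OPT_{2\ell}(\frac{2\ell}{n}\V{b})$.

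I expect the main obstacle to be the transport/concentration step: unlike the one-time analysis, where the slack was a constant $\epsilon$, here the slack $h_\ell$ shrinks with $\ell$, so I must verify that for \emph{every} $\ell\in L$ the deviation threshold $\sim h_\ell\frac{\ell}{n}b_i$ is still large enough, relative to the variance $\le b_i/n$ over a window of $2\ell$ samples, for the Hoeffding--Bernstein tail to beat $\delta$ after the union bound over all $E$ levels. The point that makes this work uniformly is the identity $h_\ell^2\cdot\frac{\ell}{n}=\epsilon^2$, built into the definition $h_\ell=\epsilon\sqrt{n/\ell}$: it renders the relevant exponent $\sim\epsilon^2 b_i$ scale-invariant, so the single condition $B\ge 20m\log n/\epsilon^2$ suffices at every level. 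Reconciling the constants across the union bound is where I would need to be most careful.
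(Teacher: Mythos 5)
Your proposal takes essentially the same route as the paper's own proof: the same primal--dual optimality of $(\{x_t(\hat{\bp}^{\ell})\}_{t=1}^{2\ell},\hat{\bp}^{\ell})$ for $\LP_{2\ell}(\hat{\V{b}})$, the same deviation bound \eqref{eq:deviation-inequality2} obtained from complementary slackness in \eqref{lstep}, the Lemma~\ref{KKT}-type $m$-index correction absorbed via $B\ge m/\epsilon^2$, Hoeffding--Bernstein with the scale-invariance $h_\ell^2\,\ell/n=\epsilon^2$, a union bound over $n^m$ distinct prices, $m$ coordinates, and $E$ levels, and the final scaling argument $\OPT_{2\ell}(\hat{\V{b}})\ge(1-2h_\ell-\epsilon)\OPT_{2\ell}(\frac{2\ell}{n}\V{b})$. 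If anything, you are slightly more careful than the paper in verifying primal feasibility of $x_t(\hat{\bp}^{\ell})$ for the coordinates with $\hat{p}_i^{\ell}=0$ (the paper leaves this implicit), which is harmless since the event of Lemma~\ref{prop:constraints} is intersected with this lemma's event in the proof of Theorem~\ref{th:main} anyway.
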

{\noindent\bf Proof.} Let $\hat{b}_i = \sum_{j=1}^{2\ell} a_{ij}
x_j(\hat{\bp}^{\ell})$ for $i$ such that $\hat{p}^{\ell}_i > 0$, and
$\hat{b}_i = \max\{\sum_{j=1}^{2\ell} a_{ij} x_j(\hat{\bp}^{\ell}),
\frac{2\ell}{n}b_i\}$, otherwise. Then the solution pair
$(\{x_t(\hat{\bp}^{\ell})\}_{t=1}^{2\ell}, \hat{\bp}^{\ell})$
satisfy all the complementarity conditions, thus are optimal
solutions (primal and dual respectively) to the linear program
$\LP_{2\ell}(\hat{\V{b}})$:
\begin{equation*}
\begin{array}{lll}
{\rm maximize} & \sum_{t=1}^{2\ell} \pi_t x_t &\\
\mbox{subject to} & \sum_{t=1}^{2\ell} a_{it} x_t \le \hat{b}_i, & i=1, \ldots, m\\
& 0\le x_t \le 1, & t=1, \ldots, 2\ell.
\end{array}
\end{equation*}
This means
\begin{equation*}
\sum_{t=1}^{2\ell} \pi_t x_t(\hat{\bp}^{\ell}) =
\OPT_{2\ell}(\V{\hat{b}}) \ge \left(\min_i \frac{\hat{b}_i}{b_i
\frac{2\ell}{n}}
\right)\OPT_{2\ell}\left(\frac{2\ell}{n}\V{b}\right).
\end{equation*}
Now, we analyze the ratio $\frac{\hat{b}_i}{2\ell b_i/n}$. By
definition, for $i$ such that $\hat{p}^{\ell}_i=0$, $\hat{b}_i \ge
2\ell b_i/n$. Otherwise, using techniques similar to the proof of
Lemma \ref{prop:constraints}, we can prove that with probability
$1-\epsilon$, for all $i$,
\begin{equation}
\label{eq:deviation-inequality2} \hat{b}_i = \sum_{t=1}^{2\ell}
a_{it} x_t(\hat{\bp}^{\ell})\ge
(1-2h_{\ell}-\epsilon)\frac{2\ell}{n}b_i.
\end{equation}
A detailed proof of (\ref{eq:deviation-inequality2}) appears in
Appendix \ref{app:deviation-inequality2}. And the lemma follows from
(\ref{eq:deviation-inequality2}). \hfill $\Box$\newline

Next, similar to Lemma \ref{lem:learning-period-contribution} in the
previous section, we prove the following lemma relating the optimal
value of the sample linear program to the optimal value of the
offline linear program:
\begin{lemma}
\label{lem:learning-period-contribution2} For any $\ell$,
$$\Ex\left[\OPT_\ell\left(\frac{\ell}{n}\V{b}\right)\right] \le \frac{\ell}{n} \OPT.$$
\end{lemma}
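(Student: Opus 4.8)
The plan is to mimic the weak-duality argument used in Lemma~\ref{lem:learning-period-contribution}, but now tracking the scaling factor $\ell/n$ carefully rather than specializing to $\ell = n\epsilon$. In the random permutation model the first $\ell$ arrivals form a uniform random sample $S \subseteq N$ of size $\ell$, and $\OPT_\ell(\frac{\ell}{n}\V{b})$ is the optimal value of $\LP_\ell(\frac{\ell}{n}\V{b})$, a random variable depending on which subset $S$ realizes. The expectation in the statement is taken over this choice of $S$.

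First I would write down the two LP duals. Let $(\V{x}^*, \bp^*, \V{y}^*)$ be an optimal primal-dual pair for the full offline program \eqref{primaloffline} on $N$; its dual minimizes $\V{b}^T\bp + \sum_{t\in N} y_t$ subject to $\bp^T\V{a}_t + y_t \ge \pi_t$ for all $t\in N$ and $\bp, \V{y}\ge 0$. The dual of $\LP_\ell(\frac{\ell}{n}\V{b})$ minimizes $\frac{\ell}{n}\V{b}^T\bp + \sum_{t\in S} y_t$ over constraints of the same form but restricted to $t\in S$.

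The key observation is that, since $S \subseteq N$, the restriction $(\bp^*, \{y^*_t\}_{t\in S})$ of the full optimal dual solution is feasible for the dual of $\LP_\ell(\frac{\ell}{n}\V{b})$: its constraint set is simply a subset of the full program's constraints, and nonnegativity is preserved. By weak duality this gives the pointwise bound $\OPT_\ell(\frac{\ell}{n}\V{b}) \le \frac{\ell}{n}\V{b}^T\bp^* + \sum_{t\in S} y^*_t$, valid for every realization of $S$.

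Finally I would take the expectation over $S$. Because $S$ is a uniform sample of size $\ell$ without replacement, each index $t\in N$ lies in $S$ with probability exactly $\ell/n$, so by linearity $\Ex[\sum_{t\in S} y^*_t] = \frac{\ell}{n}\sum_{t\in N} y^*_t$. Combining, $\Ex[\OPT_\ell(\frac{\ell}{n}\V{b})] \le \frac{\ell}{n}\bigl(\V{b}^T\bp^* + \sum_{t\in N} y^*_t\bigr) = \frac{\ell}{n}\OPT$, where the last equality is strong duality for the full program. There is no substantive obstacle here; the only point requiring care is that both the constant term of the partial dual objective and the sampling probability of the $y$-variables contribute the same factor $\ell/n$. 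This alignment is exactly what makes the bound come out clean, and it is the reason the right-hand side of the partial program is scaled as $\frac{\ell}{n}\V{b}$.
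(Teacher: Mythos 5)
Your proposal is correct and is essentially the paper's own argument: the paper proves this lemma by stating it is identical to the proof of Lemma~\ref{lem:learning-period-contribution}, i.e.\ restricting the full program's optimal dual $(\bp^*, \V{y}^*)$ to the sample, invoking weak duality, and using that each index lands in the sample with probability $\ell/n$. Your only addition is carrying the general factor $\ell/n$ explicitly in place of $\epsilon$, which is exactly the intended adaptation.
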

The proof of lemma \ref{lem:learning-period-contribution2} is
exactly the same as the proof for Lemma
\ref{lem:learning-period-contribution} thus we omit its proof.

Now we are ready to prove Theorem \ref{th:main}.

{\noindent\bf Proof of Theorem \ref{th:main}:} Observe that the
output of the online solution at time $t\in \{\ell +1, \ldots,
2\ell\}$ is $x_t(\hat{\bp}^{\ell})$ as long as the constraints are
not violated. By Lemma \ref{prop:constraints} and Lemma
\ref{prop:objective1}, with probability at least $1-2\epsilon$:
$$\sum_{t=\ell+1}^{2\ell} a_{it} x_t(\hat{\bp}^{\ell})\le \frac{\ell}{n}b_i, \ \ \ \ \mbox{ for all } i\in \{1, \ldots, m\}, \ \ell \in L$$
$$\sum_{t=1}^{2\ell} \pi_t x_t(\hat{\bp}^{\ell}) \ge (1-2h_{\ell}-\epsilon)\OPT_{2\ell}\left(\frac{2\ell}{n}\V{b}\right), \quad \mbox{for all }\ell \in L.$$
Denote this event by ${\cal E}$, where $P({\cal E}) \ge
1-2\epsilon$. The expected objective value achieved by the online
algorithm can be bounded as follows:
\begin{eqnarray*}
& & \Ex\left[\sum_{\ell \in L}\sum_{t=\ell+1}^{2\ell} \pi_t
x_t\right]\\
&\ge  & \Ex\left[\sum_{\ell \in L}\sum_{t=\ell+1}^{2\ell} \pi_t x_t(\hat{\bp}^{\ell})I({\cal E})\right] \\
&\ge&\sum_{l\in
L}\Ex\left[\sum_{t=1}^{2\ell}\pi_tx_t(\hat{\V{p}}^l)I(\cal E)
\right]-\sum_{\ell\in L}\Ex\left[\sum_{t=1}^\ell
\pi_tx_t(\hat{\V{p}}^\ell)I(\cal E) \right]\\
& \ge& \sum_{\ell\in
L}(1-2h_l-\epsilon)\Ex\left[\OPT_{2\ell}\left(\frac{2\ell}{n}\V{b}\right)I(\cal
E)\right]-\sum_{\ell\in L}
\Ex\left[\OPT_{\ell}\left(\frac{\ell}{n}\V{b}\right)I(\cal E)\right]\\
& \ge&P(\cal E) \cdot\OPT-\sum_{\ell\in
L}2h_l\Ex\left[\OPT_{2\ell}\left(\frac{2\ell}{n}\V{b}\right) I(\cal
E)\right]-\epsilon\sum_{\ell\in L}
\Ex\left[\OPT_{2\ell}\left(\frac{2\ell}{n}\V{b}\right)I(\cal E)\right]-\Ex\left[\OPT_{\epsilon n}(\epsilon\V{b})I(\cal E)\right]\\
& \ge& (1-2\epsilon)\OPT-\sum_{\ell\in
L}2h_l\Ex\left[\OPT_{2\ell}\left(\frac{2\ell}{n}\V{b}\right)\right]-\epsilon\sum_{\ell\in
L}\Ex\left[\OPT_{2\ell}\left(\frac{2\ell}{n}\V{b}\right)\right]-\Ex\left[\OPT_{\epsilon
n}(\epsilon \V{b})\right]\\
& \ge& (1-2\epsilon)\OPT- 4\sum_{\ell\in
L}\frac{h_{\ell}\ell}{n}\OPT-2\epsilon\sum_{l\in
L}\frac{\ell}{n}\OPT-\epsilon\OPT\\
& \ge & (1-15\epsilon)\OPT.
\end{eqnarray*}
The third inequality is due to Lemma \ref{prop:objective1}, the
second to last inequality is due to Lemma
\ref{lem:learning-period-contribution2} and the last inequality
follows from the fact that
$$\sum_{\ell \in L }\frac{\ell}{n} = (1-\epsilon), \mbox{ and }
\sum_{\ell \in L} h_{\ell} \frac{\ell}{n} = \epsilon \sum_{\ell\in
L} \sqrt{\frac{\ell}{n}} \le 2.5\epsilon.$$ Therefore, Theorem
\ref{th:main} is proved. \hfill $\Box$

\section{Worst-case Bound for any Algorithm}
\label{sec:low}

In this section, we prove Theorem \ref{th:lowerbound}, i.e., the
condition $B\ge \Omega(\log m/\epsilon^2)$ is necessary for any
online algorithm to achieve a competitive ratio of $1-O(\epsilon)$.
We prove this by constructing an instance of (\ref{primaloffline})
with $m$ items and $B$ units of each item such that no online
algorithm can achieve a competitive ratio of $1-O(\epsilon)$ unless
$B\ge \Omega(\log m/\epsilon^2)$.

In this construction, we refer to the $0-1$ vectors $\V{a}_t$'s as
demand vectors, and $\pi_t$'s as profit coefficients. Assume
$m=2^{z}$ for some integer $z$. We will construct $z$ pairs of
demand vectors such that the demand vectors in each pair are
complement to each other, and do not share any item.

However, every set of $z$ vectors consisting of exactly one vector
from each pair will share at least one common item. To achieve this,
consider the $2^{z}$ possible boolean strings of length $z$. The
$j^{th}$ boolean string represents $j^{th}$ item for $j=1,\ldots,
m=2^z$ (for illustrative purpose, we index the item from 0 in our
later discussion). Let $s_{ij}$ denote the value at $i^{th}$ bit of
the $j^{th}$ string. Then, we construct a pair of demand vectors
$\V{v}_i, \V{w}_i \in \{0,1\}^m$, by setting $v_{ij}=s_{ij}$,
$w_{ij} = 1-s_{ij}$.

Table \ref{table:counterexample} illustrates this construction for
$m=8$ ($z =3$):


\begin{table}[th]
\begin{center}
\begin{tabular}{p{0.5cm}p{0.7cm}p{0.7cm}p{0.7cm}p{0.7cm}p{1.7cm}p{0.5cm}p{0.7cm}p{0.7cm}p{0.7cm}p{0.7cm}}
 \multicolumn{2}{c}{} & \multicolumn{3}{c}{Demand vectors} &
\multicolumn{3}{c}{} & \multicolumn{3}{c}{Demand vectors} \\
\multicolumn{2}{c}{} & $\mbox{ }\bf{v}_3$ & $\mbox{ }\bf{v}_2$ &
$\mbox{ }\bf{v}_1$ &
\multicolumn{3}{c}{} & $ \mbox{ }\bf{w}_3$ & $\mbox{ }\bf{w}_2$ & $\mbox{ }\bf{w}_1$ \\
\cline{3-5} \cline{9-11} \multirow{8}{*}{\begin{turn}{90}
Items\end{turn}} & 0 & \multicolumn{1}{|c|}0 & \multicolumn{1}{c|}0
& \multicolumn{1}{c|}0 & &
\multirow{8}{*}{\begin{turn}{90} Items\end{turn}} & 0 & \multicolumn{1}{|c|}1 & \multicolumn{1}{c|}1 & \multicolumn{1}{c|}1 \\
& 1 & \multicolumn{1}{|c|} 0 & \multicolumn{1}{c|}0 & \multicolumn{1}{c|}1 & & & 1 & \multicolumn{1}{|c|}1 & \multicolumn{1}{c|}1 & \multicolumn{1}{c|}0\\
& 2 & \multicolumn{1}{|c|} 0 & \multicolumn{1}{c|}1 & \multicolumn{1}{c|}0 & & & 2 & \multicolumn{1}{|c|}1 & \multicolumn{1}{c|}0 & \multicolumn{1}{c|}1\\
& 3 & \multicolumn{1}{|c|} 0 & \multicolumn{1}{c|}1 & \multicolumn{1}{c|}1 & & & 3 & \multicolumn{1}{|c|}1 & \multicolumn{1}{c|}0 & \multicolumn{1}{c|}0\\
& 4 & \multicolumn{1}{|c|} 1 & \multicolumn{1}{c|}0 & \multicolumn{1}{c|}0 & & & 4 & \multicolumn{1}{|c|}0 & \multicolumn{1}{c|}1 & \multicolumn{1}{c|}1\\
& 5 & \multicolumn{1}{|c|} 1 & \multicolumn{1}{c|}0 & \multicolumn{1}{c|}1 & & & 5 & \multicolumn{1}{|c|}0 & \multicolumn{1}{c|}1 & \multicolumn{1}{c|}0\\
& 6 & \multicolumn{1}{|c|} 1 & \multicolumn{1}{c|}1 &
\multicolumn{1}{c|}0 & & & 6 & \multicolumn{1}{|c|}0 &
\multicolumn{1}{c|}0 & \multicolumn{1}{c|}1\\ & 7 &
\multicolumn{1}{|c|} 1 & \multicolumn{1}{c|}1 & \multicolumn{1}{c|}1
& & & 7 & \multicolumn{1}{|c|}0 & \multicolumn{1}{c|}0 &
\multicolumn{1}{c|}0 \\ \cline{3-5} \cline{9-11}
\end{tabular}\caption{Illustration of the worst-case
bound}\label{table:counterexample}
\end{center}
\end{table}
Note that the pair of vectors $\V{v}_i, \V{w_i}, i=1,\ldots, z$ are
complement to each other. Consider any set of $z$ demand vectors
formed by picking exactly one of the two vectors $\V{v}_i$ and
$\V{w}_i$ for each $i=1,\ldots, z$. Then form a bit string by
setting $s_{ij}=1$ if this set has vector $\V{v}_i$ and $0$ if it
has vector $\V{w}_i$. Then, all the vectors in this set share the
item corresponding to the boolean string. For example, in Table
\ref{table:counterexample}, the demand vectors $\V{v}_3, \V{w}_2,
\V{w}_1$ share item $4 (='100')$, the demand vectors $\V{w}_3,
\V{v}_2, \V{v}_1$ share item $3(='011')$ and so on.

Now, we construct an instance consisting of
\begin{itemize}
\item $B/z$ inputs with profit coefficient $4$ and demand vector $\V{v}_i$, for each $i=1,\ldots, z$.
\item $q_i$ inputs with profit $3$ and demand vector $\V{w_i}$, for each $i$, where $q_i$ is a random variable following Binomial($2B/z,1/2$).
\item $\sqrt{B/4z}$ inputs with profit $2$ and demand  vector $\V{w_i}$, for each $i$.
\item $2B/z-q_i$ inputs with profit $1$ and demand vector $\V{w}_i$, for each $i$.
\end{itemize}

Using the properties of demand vectors ensured in the construction,
we prove the following claim:

\begin{claim}
\label{claim:low} Let $r_i$ denote the number of vectors of type
$\V{w}_i$ accepted by any $1-\epsilon$ competitive solution for the
constructed example.  Then, it must hold that
$$ \sum_i |r_i - B/z| \le 7\epsilon B.$$
\end{claim}
{\noindent\bf Proof.} Let $\OPT$ denote the optimal value of the
offline problem. And let $\OPT_i$ denote the profit obtained from
demands accepted of type $i$. Let $\topw_i(k)$ denote the sum of
profits of top $k$ inputs with demand vector $\V{w}_i$. Then
\vspace{-0.1in}
$$\OPT = \sum_{i=1}^{z} \OPT_i \ge \sum_{i=1}^{z} (4B/z+\topw_i(B/z)) = 4B + \sum_{i=1}^{z} \topw_i(B/z) \vspace{-0.1in}.$$
Let $\OBJ$ be the objective value of a solution which accepts $r_i$
vectors of type $\V{w}_i$. First, note that $\sum_i r_i \le B$. This
is because all $\V{w}_i$s share one common item, and there are at
most $B$ units of this item available. Let $Y$ be the set $\{i:r_i >
B/z\}$, and $X$ be the remaining $i$'s, i.e. $X=\{i:r_i\le B/z\}$.
Then, we show that the total number of accepted $\V{v}_i$s cannot be
more than $B-\sum_{i\in Y} r_i + |Y| B/z$. Obviously, the set $Y$
cannot contribute more than $|Y|B/z$ $\V{v}_i$s. Let $S\subseteq X$
contribute the remaining $v_i$s. Now consider the item that is
common to all $\V{w}_i$s in set $Y$ and $\V{v}_i$s in the set $S$
(there is at least one such item by construction). Since only $B$
units of this item are available, the total number of $v_i$s
contributed by $S$ cannot be more than $B-\sum_{i\in Y} r_i$.
Therefore the number of accepted $\V{v}_i$s is less than or equal to
$B-\sum_{i\in Y} r_i+|Y|B/z$.

Denote $P=\sum_{i\in Y} r_i - |Y| B/z$, $M=|X| B/z - \sum_{i\in X}
r_i$. Then, $P, M \ge 0$. And the objective value
\begin{eqnarray*}
\OBJ & \le & \sum_{i=1}^{z} \topw_i(r_i)+ 4(B-\sum_{i\in Y} r_i + |Y|B/z) \\
& \le &\sum_{i=1}^z \topw_i(B/z) + 3P-M+ 4(B-P) \\
& = & \OPT -P-M.
\end{eqnarray*}

Since $\OPT \le 7B$, this means that, $P+M$ must be less than
$7\epsilon B$ in order to get an approximation ratio of $1-\epsilon$
or better. \hfill $\Box$

Here is a brief description of the remaining proof. By construction,
for every $i$, there are exactly $2B/z$ demand vectors $\V{w}_i$
that have profit coefficients $1$ and $3$, and among them each has
equal probability to take value $1$ or $3$. Now, from the previous
claim, in order to get a near-optimal solution, one must select
close to $B/z$ demand vectors of type $\V{w}_i$. Therefore, if the
total number of $(3, \V{w}_i)$ inputs are more than $B/z$, then
selecting any $(2,\V{w}_i)$ will cause a loss of $1$ in profit as
compared to the optimal profit; and if the total number of $(3,
\V{w}_i)$ inputs are less than $B/z-\sqrt{B/4z}$, then rejecting any
$(2,\V{w}_i)$ will cause a loss of $1$ in profit. Using the central
limit theorem, at any step, both these events can happen with a
constant probability. Thus, every decision for $(2,\V{w}_i)$ might
result in a loss with constant probability, which results in a total
expected loss of $\Omega(\sqrt{B/z})$ for every $i$, that is, a
total loss of $\Omega(\sqrt{z B})$.

If the number of $\V{w}_i$s to be accepted is not exactly $B/z$,
some of these $\sqrt{B/z}$ decisions may not be mistakes, but as in
the claim above, such cases cannot be more than $7\epsilon B$.
Therefore, the expected value of online solution,
$$ \ONLINE \le \OPT - \Omega(\sqrt{z B}-7\epsilon B).$$
Since $\OPT \le 7B$, in order to get $(1-\epsilon)$ approximation
factor, we need
$$ \Omega(\sqrt{z/B} - 7\epsilon) \le 7\epsilon \Rightarrow  B \ge \Omega(z/\epsilon^2) = \Omega(\log(m)/\epsilon^2).$$

This completes the proof of Theorem \ref{th:lowerbound}. A detailed
exposition of the steps used in this proof appears in  Appendix
\ref{app:lower-bound}.

\section{Extensions}
\label{sec:ext} We provide a few extensions of our results in this
section.

\subsection{Online multi-dimensional linear program}
\label{sec:ext-convex} We consider the following more general online
linear programs with multi-dimensional decisions $\bx_t \in
\mathbb{R}^k$ at each step, as defined in
(\ref{multidimensionalonline}) in Section \ref{sec:intro}:
\begin{equation}
\begin{array}{lll}
\label{convexoffline}
{\rm maximize} & \sum_{t=1}^n \V{f}^T_t \bx_t & \\
\mbox{subject to} & \sum_{t=1}^n \V{g}^T_{it}\bx_t \le b_i, & i=1, \ldots, m\\
& \bx_t^T\V{e} \le 1, \bx_t \ge 0, & t=1,\ldots,n\\
& \bx_t \in \mathbb{R}^{k}, & t = 1,\ldots, n.
\end{array}
\end{equation}
Our online algorithm remains essentially the same (as described in
Section \ref{sec:dynamic}), with $\bx_t(\bp)$ now defined as
follows:
\begin{equation*}
{\bx}_t(\bp) = \left\{
\begin{array}{ll}\label{outputgeneral}
0 & \mbox{ if  for all $j$, } f_{tj}  \le \sum_i p_ig_{itj} \\
\V{e}_r & \mbox{ otherwise, where } r \in \arg \max_{j} (f_{tj} -
\sum_i p_ig_{itj} ).
\end{array}\right.
\end{equation*}
Here $\V{e}_r$ is the unit vector with $1$ at the $r$th entry and
$0$ otherwise. And we break ties arbitrarily in our algorithm. Using
the complementarity conditions of (\ref{convexoffline}), and the
lower bound condition on $B$ as assumed in Theorem \ref{th:convex},
we can prove the following lemmas.\endnote{Here we make an
assumption similar to Assumption \ref{assumption:ties}. That is, for
any $\bp$, there can be at most $m$ arrivals such that there are
ties in $f_{tj} - \sum_{i}p_ig_{itj}$. As argued in the discussions
following Assumption \ref{assumption:ties}, this assumption is
without loss of generality.} The proofs are very similar to the
proofs for the one-dimensional case, and will be provided in
Appendix \ref{app:multi_dimension}.
\begin{lemma}\label{multilemma1}
Let $\bx^*$ and $\bp^*$ be the optimal primal and dual solutions to
(\ref{convexoffline}) respectively. Then $\bx^*_t$ and
$\bx_t(\bp^*)$ differs for at most $m$ values of $t$.
\end{lemma}
\begin{lemma}\label{multilemma2}
Define $\bp$ and $\bq$ to be {\it distinct} if and only if
$\V{x}_t(\bp) \ne \V{x}_t(\bq)$ for some $t$. Then, there are at
most $n^mk^{2m}$ distinct price vectors.
\end{lemma}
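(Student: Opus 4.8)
The plan is to show that the decision vector $(\bx_1(\bp),\ldots,\bx_n(\bp))$ is a piecewise-constant function of the price $\bp\in\mathbb{R}^m$ whose pieces are the cells of an arrangement of $O(nk^2)$ hyperplanes, and then to invoke the same computational-geometry bound used in the one-dimensional case (Lemma~\ref{lem:one-time-constraints}).

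First I would make explicit how $\bx_t(\bp)$ depends on $\bp$. For each customer $t$ and coordinate $j$, define the affine form $\phi_{tj}(\bp)=f_{tj}-\sum_i p_i g_{itj}$. By the allocation rule \eqref{outputgeneral}, $\bx_t(\bp)=0$ precisely when $\max_j \phi_{tj}(\bp)\le 0$, and otherwise $\bx_t(\bp)=\V{e}_r$ for $r\in\arg\max_j \phi_{tj}(\bp)$. Hence $\bx_t(\bp)$ is completely determined by (i) the sign of each $\phi_{tj}(\bp)$, $j=1,\ldots,k$, which tells us whether some coordinate is profitable, i.e.\ whether we allocate at all, and (ii) the sign of each pairwise difference $\phi_{tj}(\bp)-\phi_{tj'}(\bp)$, $1\le j<j'\le k$, which fixes the ordering of the $\phi_{tj}$'s and therefore the $\arg\max$. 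In particular, $\bx_t(\bp)$ is constant on any region of price space on which all of these affine forms keep a fixed sign.

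Next I would count the hyperplanes. For a single $t$ there are $k$ forms of the first type and $\binom{k}{2}$ of the second, a total of $k+\binom{k}{2}=\tfrac{k(k+1)}{2}\le k^2$. Collecting over all $t\in\{1,\ldots,n\}$ gives a family of $H\le nk^2$ affine forms, each defining a hyperplane $\{\bp:\psi(\bp)=0\}$ in the $m$-dimensional price space. If $\bp$ and $\bq$ lie in the same open cell of the arrangement determined by these $H$ hyperplanes, then every form has the same sign at $\bp$ and at $\bq$, and by the previous paragraph this forces $\bx_t(\bp)=\bx_t(\bq)$ for every $t$; thus $\bp$ and $\bq$ are not distinct. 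Therefore the number of distinct prices is at most the number of faces of an arrangement of $H$ hyperplanes in $\mathbb{R}^m$. Invoking the same bound from the theory of hyperplane arrangements cited for the scalar case (\cite{Orlik}), namely that such an arrangement has $O(H^m)$ faces, and substituting $H\le nk^2$ gives $O((nk^2)^m)=n^m k^{2m}$, the claimed bound.

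The one genuinely delicate point, which I expect to be the main obstacle, is verifying assertions (i)–(ii): one must check that the signs of these two families of forms really pin down the decision $\bx_t(\bp)$, including the degenerate cases where the maximizer in \eqref{outputgeneral} is not unique. These ties occur only on the hyperplanes themselves, i.e.\ on lower-dimensional faces of the arrangement, and so do not enlarge the count beyond $O(H^m)$; everything else is a direct translation of the one-dimensional separation-counting argument, with the single separating hyperplane of the scalar rule replaced by the richer $\arg\max$ structure of the vector-valued rule \eqref{outputgeneral}.
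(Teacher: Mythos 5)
Your proposal is correct and follows essentially the same route as the paper's proof: both reduce the allocation rule \eqref{outputgeneral} to the sign pattern of the same $nk^2$ affine forms in $\bp$ (the per-coordinate profitabilities $f_{tj}-\bp^T\V{g}_{tj}$ and their pairwise differences) and then bound the number of realizable patterns by $(nk^2)^m$ via the standard hyperplane-arrangement count in $\mathbb{R}^m$. Your treatment is in fact slightly more careful than the paper's two-line argument, since you explicitly verify that tie cases lie on lower-dimensional faces of the arrangement and hence do not inflate the $O(H^m)$ count.
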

\noindent With the above lemmas, the proof of Theorem
\ref{th:convex} will follow exactly as the proof for Theorem
\ref{th:main}.

\subsection{Online integer programs}
\label{sec:ext-integer} From the definition of $x_t(\bp)$ in
(\ref{linear:xp}), our algorithm always outputs integer solutions.
And since the competitive ratio analysis compares the online
solution to the optimal solution of the corresponding linear
programming relaxation, the competitive ratio stated in Theorem
\ref{th:main} also holds for the online integer programs. The same
observation holds for the general online linear programs introduced
in Section \ref{sec:ext-convex} since it also outputs integer
solutions.

\subsection{Fast solution for large linear programs by column sampling}
Apart from online problems, our algorithm can also be applied for
solving (offline) linear programs that are too large to consider all
the variables explicitly. Similar to the one-time learning online
solution, one could randomly sample of $\epsilon n$ variables, and
use the dual solution $\hat{\bp}$ for this smaller program to set
the values of variables $x_j$
 as $x_j(\hat{\bp})$.
This approach is very similar to the column generation method used
for solving large linear programs \cite{dantzig}. Our result
provides the first rigorous analysis of the approximation achieved
by the approach of reducing the linear program size by randomly
selecting a subset of columns.

\section{Conclusions}
\label{sec:conclusions} In this paper, we provide a $1-O(\epsilon)$
competitive algorithm for a general class of online linear
programming problems under the assumption of random order of arrival
and some mild conditions on the right-hand-side input. The
conditions we use are independent of the optimal objective value,
the objective coefficients, and the distributions of input data.

Our dynamic learning algorithm works by dynamically updating a
threshold price vector at geometric time intervals, where the dual
prices learned from the revealed columns in the previous period are
used to determine the sequential decisions in the current period.
Our dynamic learning approach might be useful in designing online
algorithms for other problems.

There are many questions for future research. One important question
is whether the current bound on the size of the right-hand-input $B$
is  tight? Currently as we show in this paper, there is a gap
between our algorithm and the lower bound. Through some numerical
experiments, we find that the actual performance of our algorithm is
close to the lower bound (see \cite{wang_thesis}). However, we are
not able to prove it. Filling that gap would be a very interesting
direction for future research.

%
%
\begin{APPENDICES}

\section{Supporting lemmas for Section \ref{sec:one-time}}
\label{app:one-time}

\subsection{Hoeffding-Bernstein's Inequality for sampling without replacement}
By Theorem 2.14.19 in \cite{vandervaat}:
\begin{lemma}
\label{HB} Let $u_1,u_2,...u_r$ be random samples without
replacement from the real numbers $\{c_1,c_2,...,c_R\}$. Then for
every $t>0$,
\begin{eqnarray*}
P\left(\left|\sum_{i=1}^r u_i-r\bar{c}\right|\ge t\right)\le
2\exp\left(-\frac{t^2}{2r\sigma_R^2+t\Delta_R}\right)
\end{eqnarray*}
where $\Delta_R=\max_{i}c_i-\min_{i}c_i$,  $\bar{c} =
\frac{1}{R}\sum_i c_i$, and
$\sigma_R^2=\frac{1}{R}\sum_{i=1}^R(c_i-\bar{c})^2$.
\end{lemma}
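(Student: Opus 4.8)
The plan is to reduce the without-replacement sample to an independent (with-replacement) sample and then apply a classical Bernstein-type tail bound. Write $T=\sum_{i=1}^r u_i$ for the without-replacement sum, and let $v_1,\dots,v_r$ be i.i.d.\ draws \emph{with} replacement from the same population $\{c_1,\dots,c_R\}$, with sum $W=\sum_{i=1}^r v_i$; both $T$ and $W$ have mean $r\bar c$. The central step is the classical comparison inequality of Hoeffding: sampling without replacement is dominated in the convex order by sampling with replacement, so that
\[
\Ex\!\left[\phi(T)\right]\ \le\ \Ex\!\left[\phi(W)\right]
\qquad\text{for every convex }\phi.
\]
This is a known (and somewhat delicate) combinatorial fact, and it is precisely the ingredient that lets one treat the dependent without-replacement sample as if it were independent for the purpose of bounding convex functionals; I would invoke it rather than reprove it.

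Applying the comparison with the convex function $\phi(x)=e^{\lambda(x-r\bar c)}$ for $\lambda>0$ gives, using independence of the $v_i$,
\[
\Ex\!\left[e^{\lambda(T-r\bar c)}\right]\ \le\ \Ex\!\left[e^{\lambda(W-r\bar c)}\right]\ =\ \prod_{i=1}^r \Ex\!\left[e^{\lambda(v_i-\bar c)}\right].
\]
Each centered draw $v_i-\bar c$ has mean $0$, variance $\sigma_R^2$, and lies in an interval of length $\Delta_R$, so the standard Bennett moment-generating-function estimate bounds each factor by $\exp\!\big(\sigma_R^2(e^{\lambda\Delta_R}-1-\lambda\Delta_R)/\Delta_R^2\big)$. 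Multiplying the $r$ factors, combining with the Chernoff bound $P(T-r\bar c\ge t)\le e^{-\lambda t}\,\Ex[e^{\lambda(T-r\bar c)}]$, and optimizing over $\lambda$ produces the Bennett rate function; the standard lower bound $(1+u)\log(1+u)-u\ge \tfrac{u^2/2}{1+u/3}$ on that rate function then yields the one-sided tail $P(T-r\bar c\ge t)\le \exp\!\big(-t^2/(2r\sigma_R^2+t\Delta_R)\big)$, where the coefficient $1$ on $t\Delta_R$ is a deliberate relaxation of the sharp $2/3$. The two-sided statement follows by running the same argument on the population $\{-c_1,\dots,-c_R\}$ and adding the two tail bounds, which yields the factor $2$.

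The main obstacle is the convex-comparison step: it is what licenses passing from the genuinely dependent without-replacement sample to a product over independent draws, and hence is the reason a variance-sensitive (Bernstein) bound survives at all rather than a cruder range-only estimate. Everything after it --- the Bennett MGF estimate, the Chernoff optimization, and the symmetrization --- is routine; only the bookkeeping of constants needs care to land exactly on the denominator $2r\sigma_R^2+t\Delta_R$. As an alternative to the convex comparison one could try a martingale/bounded-differences approach (revealing the sampled values one at a time), but the plain bounded-difference inequality gives a Hoeffding-type, range-based bound and does not recover the variance factor $\sigma_R^2$; obtaining the variance term that way would require a variance-aware martingale inequality, so the Hoeffding comparison remains the cleanest route.
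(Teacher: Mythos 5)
Your proposal is correct. The paper does not prove this lemma at all---it imports it by citation from van der Vaart and Wellner (Theorem 2.14.19)---and the argument you give (Hoeffding's 1963 convex-order comparison between sampling without and with replacement, applied to $\phi(x)=e^{\lambda(x-r\bar c)}$, followed by the Bennett moment-generating-function bound, Chernoff optimization, and the rate-function estimate $(1+u)\log(1+u)-u\ge \frac{u^2/2}{1+u/3}$) is precisely the standard proof behind that citation. Your bookkeeping is also right: sharp Bernstein gives denominator $2r\sigma_R^2+\frac{2}{3}t\Delta_R$, so relaxing $\frac{2}{3}$ to $1$ yields exactly the (slightly weaker) form stated in the lemma, and negating the population gives the two-sided factor of $2$.
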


\subsection{Proof of inequality (\ref{eq:deviation-inequality1})}
\label{app:deviation-inequality1} We prove that with probability
$1-\epsilon$, $\hat{b}_i = \sum_{t\in N} a_{it} x_t(\hat{\bp}) \ge
(1-3\epsilon)b_i$ given $\sum_{t\in S} a_{it} x_t(\hat{\bp}) \ge
(1-2\epsilon)\epsilon b_i$. The proof is very similar to the proof
of Lemma \ref{lem:one-time-constraints}. Fix a price vector $\bp$
and $i$. Define a permutation is ``bad'' for $\bp, i$ if both (a)
$\sum_{t\in S} a_{it} x_t(\bp) \ge (1-2\epsilon)\epsilon b_i$ and
(b) $\sum_{t\in N} a_{it} x_t(\bp) \le (1-3\epsilon)b_i$ hold.

Define $Y_t = a_{it}x_t(\bp)$. Then, the probability of bad
permutations is bounded by:
\begin{eqnarray*}
P\left(\left|\sum_{t\in  S} Y_t - \epsilon \sum_{t\in N} Y_t\right|
\ge \epsilon ^2b_i\left|\sum_{t\in N}Y_t\le
(1-3\epsilon)b_i\right.\right) & \le & P\left(\left|\sum_{t\in  S}
Z_t - \epsilon \sum_{t\in N} Z_t\right| \ge \epsilon
^2b_i\left|\sum_{t\in N}Z_t =
(1-3\epsilon)b_i\right.\right) \\
& \le &  2\exp\left(-\frac{b_i\epsilon^3}{3}\right) \le
\frac{\epsilon}{m \cdot n^m}
\end{eqnarray*}
where $Z_t = \frac{(1-3\epsilon)b_i Y_t}{\sum_{t\in N}Y_t}$ in the
first inequality and the second inequality is because of Lemma
\ref{HB} and the last inequality follows from that $b_i \ge
\frac{6m\log(n/\epsilon)}{\epsilon^3}$. Summing over $n^m$ distinct
prices and $i=1, \ldots, m$, we get the desired inequality. \hfill
$\Box$

\section{Supporting lemmas for Section \ref{sec:dynamic}}

\subsection{Proof of Lemma \ref{prop:constraints}}
\label{proof: dynamic} Consider $\sum_{t} a_{it}\hat{x}_t$ for a
fixed $i$. For ease of notation, we temporarily omit the subscript
$i$. Define $Y_t=a_tx_t(\V{p})$.
 If $\V{x}$ and $\V{p}$ are the optimal primal and
dual solutions for (\ref{lstep}) and its dual respectively, then we
have:
\begin{eqnarray*}
\sum_{t=1}^{\ell} Y_t=\sum_{t=1}^{\ell} a_t x_t({\V{p}}) \le
\sum_{t=1}^{\ell} a_t x_t \le (1-h_{\ell})b\frac{\ell}{n}.
\end{eqnarray*}
Here the first inequality is because of the definition of
$x_t(\V{p})$ and Lemma \ref{KKT}. Therefore, the probability of
``bad'' permutations for this $\V{p}$, $i$ and $\ell$ is bounded by:
\begin{eqnarray}\label{twoterms}
& & P\left(\sum_{t=1}^{\ell} Y_t \le (1-h_{\ell})\frac{b\ell}{n},\ \sum_{t=\ell+1}^{2\ell} Y_t \ge \frac{b\ell}{n}\right) \nonumber\\
& \le & P\left(\sum_{t=1}^{\ell} Y_t \le
(1-h_{\ell})\frac{b\ell}{n}, \sum_{t=1}^{2\ell} Y_t \ge
\frac{2b\ell}{n}\right)+ P\left(\left|\sum_{t=1}^{\ell} Y_t -
\frac{1}{2}\sum_{t=1}^{2\ell} Y_t\right| \ge \frac{h_\ell}{2}
\frac{b\ell}{n}, \ \sum_{t=1}^{2\ell} Y_t \le
\frac{2b\ell}{n}\right).
\end{eqnarray}
For the first term, we first define $Z_t = \frac{2b\ell
Y_t}{n\sum_{t=1}^{2\ell} Y_t}$. It is easy to see that
\begin{eqnarray*}
P\left(\sum_{t=1}^{\ell} Y_t \le (1-h_{\ell})\frac{b\ell}{n},
\sum_{t=1}^{2\ell} Y_t \ge \frac{2b\ell}{n}\right) \le
P\left(\sum_{t=1}^{\ell} Z_t \le (1-h_{\ell})\frac{b\ell}{n},
\sum_{t=1}^{2\ell} Z_t = \frac{2b\ell}{n}\right).
\end{eqnarray*}
And furthermore, using Lemma \ref{HB}, we have
\begin{eqnarray*}
P\left(\sum_{t=1}^{\ell} Z_t \le (1-h_{\ell})\frac{b\ell}{n},
\sum_{t=1}^{2\ell} Z_t = \frac{2b\ell}{n}\right)& \le &
P\left(\sum_{t=1}^{\ell} Z_t \le (1-h_{\ell})\frac{b\ell}{n} \left|
\sum_{t=1}^{2\ell} Z_t = \frac{2b\ell}{n}\right.\right)\\
& \le &  P\left(\left|\sum_{t=1}^{\ell} Z_t -
\frac{1}{2}\sum_{t=1}^{2\ell} Z_t\right| \ge h_\ell \frac{b\ell}{n}
\left| \sum_{t=1}^{2\ell} Z_t =
\frac{2b\ell}{n}\right.\right)\\
& \le& 2\exp{\left(-\frac{\epsilon^2 b}{2+h_l}\right)}\le
\frac{\delta}{2}
\end{eqnarray*}
where $\delta=\frac{\epsilon}{m\cdot n^m\cdot E}$.

For the second term of (\ref{twoterms}), we can define the same
$Z_t$, and we have
\begin{eqnarray*}
P\left(\left|\sum_{t=1}^{\ell} Y_t - \frac{1}{2}\sum_{t=1}^{2\ell}
Y_t\right| \ge \frac{h_\ell}{2} \frac{b\ell}{n}, \
\sum_{t=1}^{2\ell}Y_t\le\frac{2b\ell}{n}\right) & \le &
P\left(\left|\sum_{t=1}^{\ell} Z_t - \frac{1}{2}\sum_{t=1}^{2\ell}
Z_t\right| \ge \frac{h_\ell}{2} \frac{b\ell}{n}, \
\sum_{t=1}^{2\ell}Z_t=\frac{2b\ell}{n}\right) \\
& \le & P\left(\left|\sum_{t=1}^{\ell} Z_t -
\frac{1}{2}\sum_{t=1}^{2\ell} Z_t\right| \ge \frac{h_\ell}{2}
\frac{b\ell}{n} \left|
\sum_{t=1}^{2\ell}Z_t=\frac{2b\ell}{n}\right.\right)\\
& \le &
2\exp\left(-\frac{\epsilon^2b}{8+2h_\ell}\right)\le\frac{\delta}{2}
\end{eqnarray*}
where the second to last step is due to Lemma \ref{HB} and the last
step holds because $h_{\ell} \le 1$ and the condition made on $B$.

Lastly, we define two prices to be {\it distinct} the same way as we
do in the proof of Lemma 2.2. Then we take a union bound over all
the $n^m$ distinct prices, $i=1,\ldots, m$, and $E$ values of
$\ell$, the lemma is proved. \hfill $\Box$

\subsection{Proof of inequality (\ref{eq:deviation-inequality2})}
\label{app:deviation-inequality2} The proof is very similar to the
proof of Lemma \ref{prop:constraints}. Fix $\bp$, $\ell$ and $i \in
\{1, \ldots, m\}$, we define ``bad" permutations for $\bp, i, \ell$
as those permutations such that all the following conditions hold:
(a) $\bp=\hat{\bp}^{\ell}$, that is, $\bp$ is the price learned as
the optimal dual solution for (\ref{lstep}), (b) $p_i >0$, and (c)
$\sum_{t=1}^{2\ell} a_{it} x_t(\bp)
\le(1-2h_{\ell}-\epsilon)\frac{2\ell}{n} b_i$. We will show that the
probability of bad permutations is small.

Define $Y_t=a_{it}x_t(\V{p})$. If $\V{p}$ is an optimal dual
solution for (\ref{lstep}), and $p_i>0$, then by the KKT conditions
the $i^{th}$ inequality constraint holds with equality. Therefore,
by Lemma \ref{KKT}, we have:
\begin{eqnarray*}
\sum_{t=1}^{\ell} Y_t=\sum_{t=1}^{\ell} a_{it} x_t({\V{p}}) \ge
(1-h_{\ell})\frac{\ell}{n}b_i - m \ge
(1-h_{\ell}-\epsilon)\frac{\ell}{n}b_i,
\end{eqnarray*}
where the last inequality follows from $B=\min_i b_i \ge
\frac{m}{\epsilon^2}$, and $\ell \ge n\epsilon$. Therefore, the
probability of ``bad" permutations for $\bp, i, \ell$ is bounded by:
\begin{eqnarray*}
&& P \left(\sum_{t=1}^{\ell} Y_t \ge (1-h_{\ell} -\epsilon)
\frac{\ell}{n}b_i, \sum_{t=1}^{2\ell} Y_t \le (1-
2h_{\ell}-\epsilon)\frac{2\ell}{n}b_i\right) \\
 &\le & P\left(\left|\sum_{t=1}^{\ell} Y_t - \frac{1}{2}\sum_{t=1}^{2\ell}
Y_t\right| \ge h_\ell
\frac{b_i\ell}{n}\left|\sum_{t=1}^{2\ell}Y_t\le
(1-2h_\ell-\epsilon)\frac{2\ell}{n}b_i\right.\right)\\
&\le & P\left(\left|\sum_{t=1}^{\ell} Z_t -
\frac{1}{2}\sum_{t=1}^{2\ell} Z_t\right| \ge h_\ell
\frac{b_i\ell}{n}\left|\sum_{t=1}^{2\ell}Z_t =
(1-2h_\ell-\epsilon)\frac{2\ell}{n}b_i\right.\right)\\
 &\le &
2\exp{\left(-\frac{\epsilon^2 b_i}{2}\right)}\le \delta,
\end{eqnarray*}
where $Z_t = \frac{(1-2h_\ell - \epsilon)2\ell b_i
Y_t}{n\sum_{t=1}^{2\ell} Y_t}$ and $\delta=\frac{\epsilon}{m\cdot
n^m\cdot E}$. The last inequality follows from the condition on $B$.
Next, we take a union bound over all the $n^m$ distinct  $\V{p}$'s,
$i=1,\ldots, m$, and $E$ values of $\ell$, we conclude that with
probability $1-\epsilon$
$$\sum_{t=1}^{2\ell} a_{it}\hat{x}_t(\hat{\bp}^{\ell})\ge (1-2h_{\ell}-\epsilon)\frac{2\ell}{n}b_i$$
for all $i$ such that $\hat{\bp}_i>0$ and all $\ell$. \hfill $\Box$

\section{Detailed steps for Theorem \ref{th:lowerbound}}
\label{app:lower-bound}
Let $c_1, \ldots, c_n$ denote the $n$ customers. 
For each $i$, the set $R_i \subseteq \{c_1, \ldots, c_n\}$ of
customers with bid vector $\V{w}_i$ and bid value $1$ or $3$ is
fixed with $|R_i| = 2B/z$ for all $i$. Conditional on set $R_i$ the
bid values of customers $\{c_j, j\in R_i\}$ are independent random
variables that take value $1$ or $3$ with equal probability.

Now consider the $t^{th}$ bid of $(2, \V{w_i})$. In at least $1/2$
of the random permutations, the number of bids from set $R_i$ before
the bid  $t$ is less than $B/z$. Conditional on this event, with a
constant probability the bids in $R_i$ before $t$ take values such
that the
bids after $t$ can make the number of $(3, \V{w}_i)$ bids more than
$B/z$ with a constant probability and  less than $B/z-\sqrt{B/4z}$
with a constant probability. This probability calculation is similar
to the one used by \cite{kleinberg} in his proof of the necessity of
condition $B\ge \Omega(1/\epsilon^2)$. For completeness, we derive
it in the Lemma \ref{lem:supp} towards the end of the proof.

Now, in the first type of instances (in which the number of $(3,
\V{w}_i)$ bids are more than $B/z$), retaining a $(2, \V{w}_i)$ bid
is a ``potential mistake" of size $1$; similarly, in the second type
of instances (in which the number of $(3,\V{w}_i$) bids are less
than $B/z$), skipping a $(2, \V{w}_i)$ bid is a potential mistake of
size $1$. We call it a potential mistake of size $1$ because it will
cost a profit loss of $1$ if the online algorithm decides to pick
$B/z$ of $\V{w}_i$ bids. Among these mistakes, $|r_i - B/z|$ of them
may be recovered in each instance by deciding to pick $r_i \ne B/z$
of $\V{w}_i$ bids.

The total expected number of potential mistakes is
$\Omega(\sqrt{Bz})$ (since there are $\sqrt{B/4z}$ of $(2,\V{w}_i)$
bids for every $i$). By Claim \ref{claim:low}, no more than a
constant fraction of instances can recover more than $7\epsilon B$
of the potential mistakes.

Let $\ONLINE$ denote the expected value for the online algorithm
over random permutation and random instances of the problem.
Therefore,
$$ \ONLINE \le \OPT - \Omega(\sqrt{z B}-7\epsilon B).$$
Now, observe that $\OPT \le 7B$. This is because by construction
every set of demand vectors (consisting of either $\V{v}_i$ or
$\V{w}_i$ for each $i$) will have at least $1$ item in common, and
since there are only $B$ units of this item available, at most $2B$
demand vectors can be accepted giving a profit of at most $7B$.
Therefore, $\ONLINE \le \OPT (1- \Omega(\sqrt{z/B}-7\epsilon))$, and
in order to get $(1-\epsilon)$ approximation factor we need
$$\Omega(\sqrt{z/B} - 7\epsilon) \le O(\epsilon) \Rightarrow  B \ge \Omega(z/\epsilon^2).$$
This completes the proof of Theorem \ref{th:lowerbound}.

\begin{lemma}
\label{lem:supp} Consider $2k$ random variables $Y_j, j=1,\ldots,
2k$ that take value $0/1$ independently with equal probability.
Let $r \le k$.
Then with constant probability $Y_1, \ldots, Y_r$ take value such
that $\sum_{j=1}^{2k} Y_j$ can be greater or less than its expected
value $k$ by $\sqrt{k}/2$ with equal constant probability.
$$P\left(\left.\left|\sum_{j=1}^{2k} Y_j - k\right|\ge \lceil \sqrt{k}/2 \rceil \ \right| \ Y_1, \ldots Y_r\right) \ge c$$
for some constant $0<c <1$.
\end{lemma}
{\noindent\bf Proof of Lemma \ref{lem:supp}:}
\begin{itemize}
\item Given $r\le k$, $|\sum_{j\le r} Y_j -r/2| \le \sqrt{k}/4$ with constant probability (by central limit theorem).
\item Given $r\le k$, $|\sum_{j>r} Y_j - (2k-r)/2)| \ge 3\sqrt{k}/4$ with constant probability.
\end{itemize}
Given the above events $|\sum_{j} Y_j -k|\ge \sqrt{k}/2$, and by
symmetry both events have equal probability. \hfill $\Box$
\section{Online multi-dimensional linear program}
\label{app:multi_dimension}

\subsection{Proof of Lemma \ref{multilemma1}}
Using Lagrangian duality, observe that given optimal dual solution
$\bp^*$, optimal solution $\bx^*$ is given by:
\begin{equation}
\begin{array}{lll}
\rm maximize & \V{f}^T_t\bx_t - \sum_i p^*_i\V{g}_{it}^T\bx_t & \\
\mbox{subject to} & \V{e}^T\bx  \le 1, \bx_t \ge 0.
\end{array}
\end{equation}
Therefore, it must be true that if $x^*_{tr} = 1$, then $r \in \arg
\max_{j} \{f_{tj} - (\bp^*)^T\V{g}_{tj}\}$ and $f_{tr} -
(\bp^*)^T\V{g}_{tr} \ge 0$ This means that for $t$'s such that
$\max_{j} \{f_{tj} - (\bp^*)^T\V{g}_{tj}\}$ is strictly positive and
$\arg \max_{j}$ returns a unique solution,  $\V{x}_t(\bp^*)$ and
$\V{x}^*_t$ are identical. By random perturbation argument there can
be at most $m$ values of $t$ that does not satisfy this condition
(for each such $t$, $\bp$ satisfies an equation $f_{tj} -
\bp^T\V{g}_{tj} = f_{tl} - \bp^T\V{g}_{tl}$ for some $j,l$, or
$f_{tj} - \bp^T\V{g}_{tj}  = 0$ for some $j$). This means $\bx^*_t$
and $\bx_t(\bp^*)$ differs for at most $m$ values of $t$. \hfill
$\Box$

\subsection{Proof of Lemma \ref{multilemma2}}

Consider $nk^2$ expressions
$$\begin{array}{ll}
f_{tj} - \bp^Tg_{tj} - (f_{tl} - \bp^Tg_{tl}),&  1\le j,l\le k, j\ne l, 1\le t\le n \\
f_{tj} - \bp^Tg_{tj}, 1 \le j \le k, & 1\le t\le n.
\end{array}
$$
$\V{x}_t(\bp)$ is completely determined once we determine the subset
of expressions out of these $nk^2$ expressions that are assigned a
non-negative value. By theory of computational geometry, there can
be at most $(nk^2)^m$ such distinct assignments. \hfill $\Box$

\end{APPENDICES}

\theendnotes

\ACKNOWLEDGMENT{The authors thank the two anonymous referees and the
associate editor for their insightful comments and suggestions.}

\bibliographystyle{ormsv080} 
\bibliography{ref} 

\begin{thebibliography}{32}
\expandafter\ifx\csname natexlab\endcsname\relax\def\natexlab#1{#1}\fi
\expandafter\ifx\csname url\endcsname\relax
  \def\url#1{{\tt #1}}\fi
\expandafter\ifx\csname urlprefix\endcsname\relax\def\urlprefix{URL }\fi
\expandafter\ifx\csname urlstyle\endcsname\relax
  \expandafter\ifx\csname doi\endcsname\relax
  \def\doi#1{doi:\discretionary{}{}{}#1}\fi \else
  \expandafter\ifx\csname doi\endcsname\relax
  \def\doi{doi:\discretionary{}{}{}\begingroup \urlstyle{rm}\Url}\fi \fi

\bibitem[{Awerbuch et~al.(1993)Awerbuch, Azar, and Plotkin}]{awerbuch}
Awerbuch, B., Y.~Azar, S.~Plotkin. 1993.
\newblock Throughput-competitive on-line routing.
\newblock {\it FOCS'93: Proceedings of the 34th Annual IEEE Symposium on
  Foundations of Computer Science\/}. 32--40.

\bibitem[{Babaioff et~al.(2007)Babaioff, Immorlica, Kempe, and
  Kleinberg}]{babaioff-knapsack}
Babaioff, M., N.~Immorlica, D.~Kempe, R.~Kleinberg. 2007.
\newblock A knapsack secretary problem with applications.
\newblock {\it Approximation, Randomization, and Combinatorial Optimization.
  Algorithms and Techniques\/}, {\it Lecture Notes in Computer Science\/}, vol.
  4627. 16--28.

\bibitem[{Babaioff et~al.(2008)Babaioff, Immorlica, Kempe, and
  Kleinberg}]{babaioff}
Babaioff, M., N.~Immorlica, D.~Kempe, R.~Kleinberg. 2008.
\newblock Online auctions and generalized secretary problems.
\newblock {\it SIGecom Exch.\/} {\bf 7}(2) 1--11.

\bibitem[{Bahmani and Kapralov(2010)}]{Bahmani}
Bahmani, B., M.~Kapralov. 2010.
\newblock Improved bounds for online stochastic matching.
\newblock {\it ESA'10: Proceedings of the 18th annual European conference on
  Algorithms: Part I\/}. 170--181.

\bibitem[{Bitran and Caldentey(2003)}]{bitran}
Bitran, G., R.~Caldentey. 2003.
\newblock An overview of pricing models for revenue management.
\newblock {\it Manufacturing and Service Operations Management\/} {\bf 5}(3)
  203--229.

\bibitem[{Borodin and El-Yaniv(1998)}]{borodin}
Borodin, A., R.~El-Yaniv. 1998.
\newblock {\it Online computation and competitive analysis\/}.
\newblock Combridge University Press.

\bibitem[{Buchbinder and Naor(2009{\natexlab{a}})}]{buchbiner-long09}
Buchbinder, N., J.~Naor. 2009{\natexlab{a}}.
\newblock The design of competitive online algorihms via a primal-dual approch.
\newblock {\it Foundations and Trends in Theoretical Computer Sciences\/} {\bf
  3}(2-3) 93--263.

\bibitem[{Buchbinder and Naor(2009{\natexlab{b}})}]{buchbinder-or09}
Buchbinder, N., J.~Naor. 2009{\natexlab{b}}.
\newblock Online primal-dual algorithms for covering and packing.
\newblock {\it Mathematics of Operations Research\/} {\bf 34}(2) 270--286.

\bibitem[{Cesa-Bianchi and Lugosi(2006)}]{Cesa}
Cesa-Bianchi, N., G.~Lugosi. 2006.
\newblock {\it Prediction, learning, and games\/}.
\newblock Cambridge University Press.

\bibitem[{Cooper(2002)}]{cooper}
Cooper, W.~L. 2002.
\newblock Asymptotic behavior of an allocation policy for revenue management.
\newblock {\it Operations Research\/} {\bf 50}(4) 720--727.

\bibitem[{Dantzig(1963)}]{dantzig}
Dantzig, G. 1963.
\newblock {\it Linear programming and extensions\/}.
\newblock Princeton University Press.

\bibitem[{Devanur(2011)}]{Devanur2011}
Devanur, N. 2011.
\newblock Online algorithms with stochastic input.
\newblock {\it SIGecom Exch.\/} {\bf 10}(2) 40--49.

\bibitem[{Devanur and Hayes(2009)}]{Devanur}
Devanur, N., T.~Hayes. 2009.
\newblock The adwords problem: online keyword matching with budgeted bidders
  under random permutations.
\newblock {\it EC'09: Proceedings of the 10th ACM conference on Electronic
  Commerce\/}. 71--78.

\bibitem[{Devanur et~al.(2011)Devanur, Jain, Sivan, and
  Wilkens}]{Devanur2011fast}
Devanur, N., K.~Jain, B.~Sivan, C.~Wilkens. 2011.
\newblock Near optimal online algorithms and fast approximation algorithms for
  resource allocation problems.
\newblock {\it EC'11: Proceedings of the 12th ACM conference on Electronic
  Commerce\/}. 29--38.

\bibitem[{Elmaghraby and Keskinocak(2003)}]{Elmaghraby}
Elmaghraby, W., P.~Keskinocak. 2003.
\newblock Dynamic pricing in the presence of inventory considerations: research
  overview, current practices and future directions.
\newblock {\it Management Science\/} {\bf 49}(10) 1287--1389.

\bibitem[{Feldman et~al.(2010)Feldman, Henzinger, Korula, Mirrokni, and
  Stein}]{feldman2010online}
Feldman, J., M.~Henzinger, N.~Korula, V.~Mirrokni, C.~Stein. 2010.
\newblock Online stochastic packing applied to display ad allocation.
\newblock {\it Algorithms--ESA 2010\/}  182--194.

\bibitem[{Feldman et~al.(2009{\natexlab{a}})Feldman, Korula, Mirrokni,
  Muthukrishnan, and Pal}]{feldman2009}
Feldman, J., N.~Korula, V.~Mirrokni, S.~Muthukrishnan, M.~Pal.
  2009{\natexlab{a}}.
\newblock Online ad assignment with free disposal.
\newblock {\it WINE'09: Proceedings of the 5th Workshop on Internet and Network
  Economics\/}. 374--385.

\bibitem[{Feldman et~al.(2009{\natexlab{b}})Feldman, Mehta, Mirrokni, and
  Muthukrishnan}]{feldman2009focs}
Feldman, J., A.~Mehta, V.~Mirrokni, S.~Muthukrishnan. 2009{\natexlab{b}}.
\newblock Online stochastic matching: beating 1 - 1/e.
\newblock {\it FOCS'09: Proceedings of the 50th Annual IEEE Symposium on
  Foundations of Computer Science\/}. 117--126.

\bibitem[{Gallego and van Ryzin(1994)}]{gallego2}
Gallego, G., G.~van Ryzin. 1994.
\newblock Optimal dynamic pricing of inventories with stochastic demand over
  finite horizons.
\newblock {\it Management Science\/} {\bf 40}(8) 999--1020.

\bibitem[{Gallego and van Ryzin(1997)}]{gallego}
Gallego, G., G.~van Ryzin. 1997.
\newblock A multiproduct dynamic pricing problem and its application to network
  yield management.
\newblock {\it Operations Research\/} {\bf 45}(1) 24--41.

\bibitem[{Goel and Mehta(2008)}]{Goel}
Goel, G., A.~Mehta. 2008.
\newblock Online budgeted matching in random input models with applications to
  adwords.
\newblock {\it SODA'08: Proceedings of the 19th Annual ACM-SIAM Symposium on
  Discrete Algorithms\/}. 982--991.

\bibitem[{Karande et~al.(2011)Karande, Mehta, and Tripathi}]{Karande}
Karande, C., A.~Mehta, P.~Tripathi. 2011.
\newblock Online bipartite matching with unknown distributions.
\newblock {\it STOC'11: Proceedings of the 43rd annual ACM symposium on Theory
  of Computing\/}. 587--596.

\bibitem[{Kleinberg(2005)}]{kleinberg}
Kleinberg, R. 2005.
\newblock A multiple-choice secretary algorithm with applications to online
  auctions.
\newblock {\it SODA'05: Proceedings of the 16th Annual ACM-SIAM Symposium on
  Discrete Algorithms\/}. 630--631.

\bibitem[{Mahdian and Yan(2011)}]{mahdian}
Mahdian, M., Q.~Yan. 2011.
\newblock Online bipartite matching with random arrivals: an approach based on
  strongly factor-revealing {L}{P}s.
\newblock {\it STOC'11: Proceedings of the the 43rd annual ACM symposium on
  Theory of Computing\/}. 597--606.

\bibitem[{Mehta et~al.(2005)Mehta, Saberi, Vazirani, and Vazirani}]{mehta}
Mehta, A., A.~Saberi, U.~Vazirani, V.~Vazirani. 2005.
\newblock Adwords and generalized on-line matching.
\newblock {\it FOCS'05: Proceedings of the 46th Annual IEEE Symposium on
  Foundations of Computer Science\/}. 264--273.

\bibitem[{Molinaro and Ravi(2014)}]{Molinaro2012}
Molinaro, M., R.~Ravi. 2014.
\newblock Geometry of online packing linear programs.
\newblock {\it Mathematics of Operations Research\/} {\bf 39}(1) 46--59.

\bibitem[{Orlik and Terao(1992)}]{orlik}
Orlik, P., H.~Terao. 1992.
\newblock {\it Arrangement of hyperplanes\/}.
\newblock Grundlehren der Mathematischen Wissenschaften [Fundamental Principles
  of Mathematical Sciences], Springer-Verlag, Berlin.

\bibitem[{Simpson(1989)}]{simpson}
Simpson, R.~W. 1989.
\newblock Using network flow techniques to find shadow prices for market and
  seat inventory control.
\newblock {\it MIT Flight Transportation Laboratory Memorandum M89-1,
  Cambridge, MA\/} .

\bibitem[{Talluri and van Ryzin(1998)}]{Talluri}
Talluri, K., G.~van Ryzin. 1998.
\newblock An analysis of bid-price controls for network revenue management.
\newblock {\it Management Science\/} {\bf 44}(11) 1577--1593.

\bibitem[{van~der Vaart and Wellner(1996)}]{vandervaat}
van~der Vaart, A., J.~Wellner. 1996.
\newblock {\it Weak convergence and empirical processes: with applications to
  statistics (Springer Series in Statistics)\/}.
\newblock Springer.

\bibitem[{Wang(2012)}]{wang_thesis}
Wang, Z. 2012.
\newblock Dynamic learning mechanism in revenue management problems.
\newblock Ph.D. thesis, Stanford University, Palo Alto.

\bibitem[{Williamson(1992)}]{williamson}
Williamson, E.~L. 1992.
\newblock Airline network seat control.
\newblock {\it Ph. D. Thesis, MIT, Cambridge, MA\/} .

\end{thebibliography}

%
%
%
%
%

\end{document}